\tikzset{fontscale/.style = {font=\relsize{#1}}}
\newtheorem{theorem}{Theorem}
\newtheorem{proposition}{Proposition}
\newtheorem{example}{Example}
\newtheorem{corollary}{Corollary}
\newtheorem{remark}{Remark}
\newtheorem{lemma}{Lemma}
\begin{document}
{\large
\bigskip
\vspace{7 mm}

\centerline{\large \bf Firefighting on Trees}

\vspace{5 mm}

\centerline{Pierre Coupechoux$^{\,a,}$, Marc Demange$^{\,b,}$, David Ellison$^{\,b,}$, Bertrand Jouve$^{\,c,}$}
\vspace{3 mm}
}
\centerline{\small a. LAAS-CNRS, Universit\'e de Toulouse, CNRS, Toulouse, France,}
\centerline{\small b. RMIT University, School of Science, Melbourne, Australia}
\centerline{\small c. LISST, CNRS, Universit\'e de Toulouse, Toulouse, France}
\vspace{6 mm}
\centerline{\small pierre.coupechoux@laas.fr, marc.demange@rmit.edu.au, david.ellison2@rmit.edu.au, bertrand.jouve@cnrs.fr}
\vspace{6 mm}
\begin{abstract}
In the {\sc Firefighter} problem, introduced by Hartnell in 1995, a fire spreads through a graph while a player chooses which vertices to protect in order to contain it. In this paper, we 
focus on the case of trees and we consider as well the {\sc Fractional Firefighter} game where the amount of protection allocated to a vertex lies between 0 and 1. While most of the work in this area deals with a constant amount of firefighters available at each turn, we  consider three research questions which arise when including the sequence of firefighters as part of the instance. We first introduce the online version of both {\sc Firefighter} and {\sc Fractional Firefighter}, in which the number of firefighters available at each turn is revealed over time. We show that  
a greedy algorithm on finite trees 
 is 1/2-competitive for both online versions, which generalises a result previously known for special cases of {\sc Firefighter}. 
We also show that the optimal competitive ratio of online {\sc Firefighter}  ranges between 1/2 and the inverse of the golden ratio. 
Next, given two firefighter sequences, we discuss sufficient conditions for the  existence of an infinite tree that {\em separates} them, in the sense that the fire can be contained with one sequence but not with the other. To this aim, we study a new purely numerical game called {\em targeting game}. Finally, we give sufficient conditions for the fire to be contained, expressed as the asymptotic comparison of the number of firefighters and the size of the tree levels.
\end{abstract}


\section{Introduction and Definitions}\label{sec:intro} 
\subsection{Context}

Since it was formally introduced by B. Hartnell in 1995~(\cite{hartnell}, cited in~\cite{Finbow.MacGillivray}) the firefighting problem \-- {\sc Firefighter} \-- has raised the interest of many researchers.  While this game started as a very simple model for fire spread and containment problems for wildfires, it can also represent any kind of threat able to spread sequentially in a network (diseases, viruses, rumours, flood \ldots).

It is a deterministic discrete-time one-player game defined on a graph. In the beginning, a fire breaks out on a vertex and at each step, if not blocked, the fire spreads to all adjacent vertices. In order to contain the fire, the player is given a number $f_i$ of firefighters at each turn $i$ and can use them to protect vertices which are neither burning nor already protected. The game terminates when the fire cannot spread any further. In the case of finite graphs  
the aim is to save as many vertices as possible, while in the infinite case, the player wins if the game finishes, which means that the fire is contained.

This problem and its variants give rise to a generous literature; the reader is referred to~\cite{Finbow.MacGillivray} for a broad presentation of the main research directions. A significant amount of theoretical work deals with its complexity and approximability behaviour in various classes of graphs~\cite{bazgan-chopin-ries,Cai.Verbin.Yang,Finbow.king,Fomin.fedor} and its parametrised complexity~(e.g. \cite{bazgan.chopin.fellow,Cai.Verbin.Yang}). In particular, when one firefighter is available at each turn it is known to be polynomially solvable in some classes  of graphs, which include graphs of maximum degree~3 if the fire breaks out on a vertex of degree at most~2~\cite{Finbow.king}, interval graphs, permutation graphs and split graphs~\cite{Fomin.fedor}.
However it is known to be very hard, even in some restrictive cases. In particular, the case of trees was revealed to be very rich and a lot of research focuses on it.
The problem, with the same number of firefighters at each turn, is NP-hard on finite trees of maximum degree~3~\cite{Finbow.king}, as well as in even more restricted cases~\cite{bazgan-chopin-ries}; the reader is also referred to~\cite{Chlebikova-tcs} for further complexity results.
Regarding approximation results on trees, a greedy strategy was first shown to be a $\frac 12$-approximation algorithm ~\cite{Hartnell.Li} if a fixed number of firefighters is available at each turn. For a single firefighter, a $(1-\frac 1e)$-approximation algorithm is proposed in ~\cite{Cai.Verbin.Yang} for the problem in trees. This ratio was improved in~\cite{improved-treedeg3} for ternary trees and, very recently, a polynomial time approximation scheme was obtained in trees~\cite{ptas-tree}; which essentially closes the question of approximating the firefighter problem in trees with one firefighter and motivates considering some generalisations. The problem is hard to approximate within $n^{1-\varepsilon}$ on general graphs and with a single firefighter~\cite{Approx-algorithmica}.

Most papers on this subject deal with a constant firefighter sequence. In fact, the problem was originally defined with one firefighter per turn. The case of infinite grids is of particular interest and has led to the model being extended by varying the available resources per turn. The change was motivated by the fact that a fire of any size on a 2-dimensional infinite grid can be contained with two firefighters per turn but not with one~\cite{fogarty,wang-moeller}. In order to refine these results, M.-E. Messinger started considering periodic firefighter sequences~\cite{Messinger08} while more general sequences are considered in~\cite{3/2firefighter}.
A related research direction investigates integer linear programming models for the problem, especially on trees~\cite{ptas-tree,Hartke.Stephen,MacGillivray.wang}. This line of research makes very natural a {\em relaxed} version  where the amount of firefighters available at each turn is any non-negative number and the amount allocated to vertices lies between 0 and 1. A vertex with a protection less than~1 is {\em partially protected} and its unprotected part can burn partially and transmit only its fraction of fire  to the adjacent vertices. Thus, the $f_i$ may take any non-negative value. This defines a variant game called {\sc Fractional Firefighter} which was introduced in~\cite{fogarty}.

\subsection{Our contribution}

The main thread of this paper is the focus on general firefighter sequences, which raises three specific research questions. We address these questions when a single fire spreads throughout a rooted tree.

First, we introduce an online version of both {\sc Firefighter} and {\sc Fractional firefighter}  where the sequence of firefighters is revealed over time (online) while the graph (a tree in our case) is known from the start.  
To our knowledge, this is the first attempt at analysing online firefighter problems. Although our motivation is mainly theoretical, this paradigm is particularly natural in emergency management where one has to make quick decisions despite lack of information. Any progress in this direction tells us how lack of information impacts the quality of the solution. 
Note that a version of the game introduced in~\cite{bonato12} also models a lack of information. In that version, rather than the firefighting resources, the missing information is where the fire will spread. Also, they propose randomised analyses to maximise the expected number of saved vertices while we use worst case analyses expressed in terms of competitive ratios.

A second question, the {\em separating problem}, deals specifically with infinite trees.  
Separating two given firefighter sequences means finding an infinite tree on which the fire can be contained with one sequence but not the other.

The third question deals with criteria for the fire to be contained based on the asymptotic behaviours of the firefighter sequence and the size of the levels in the tree. Unlike the first two questions, it has already been investigated in other papers (e.g., \cite{Dyer-containment,Lehner}) for {\sc Firefighter} with firefighter sequences of the form $(\lambda^n)$.

The paper is organised as follows: 
in~\Cref{sec:problems} we define formally {\sc Firefighter} and {\sc Fractional Firefighter} as well as their online versions.  
\Cref{sec:newtrees} deals with competitive analysis when the fire spreads in a finite tree and the firefighter sequence is revealed online. We first generalise an analysis of a greedy algorithm  known only in special cases of {\sc Firefighter} to {\sc Fractional Firefighter}. For the offline case, it answers an open question proposed in~\cite{Finbow.MacGillivray,Hartke.Stephen}. Then we propose improved competitive algorithms for online {\sc Firefighter} with a small total number of firefighters while establishing that the greedy approach is optimal in the general case.  
The last two sections (\Cref{sec:separe} and \Cref{sec:asymptotic}) both deal with the infinite case. \Cref{sec:separe} deals with our second question. Considering the class of spherically symmetric trees where all vertices at the same level have the same degree, we express the separation problem as a purely numerical one-player game, which we call the {\em targeting game}. We propose two sufficient conditions for the existence  of a winning strategy.  
\Cref{sec:asymptotic} deals with our third question. We establish sufficient conditions for containing the fire  expressed as asymptotic comparisons of the number of available firefighters and the size of the levels in the tree. In the online case, for a particular class of trees the level size of which grows linearly, we also give a sufficient condition to contain the fire.

\subsection{Some notations}\label{sec:notations}

Given a tree $T$ rooted in $r$, $V(T)$ and $E(T)$ will denote the vertex set and the edge set of $T$, respectively. Given two vertices $v$ and $v'$, $v\lhd  v'$ denotes that $v$ is an ancestor of $v'$ (or $v'$ is a descendant of $v$) and $v\unlhd v'$ denotes that either $v=v'$ or $v \lhd v'$. For any vertex $v$, let $T[v]$ denote the sub-tree induced by $v$ and its descendants. Let $T_i$ denote the $i$-th level of $T$ rooted in $r$, where $\{r\}=T_0$.
For a finite tree $T$ rooted in $r$, the height  $h(T)$ is the maximum length of a path from $r$ to a leaf. If $i>h(T)$, we have $T_i=\emptyset$. The \emph{weight} $w(v)$ of a vertex $v$ is the number of vertices of $T[v]$. 
When no ambiguity may occur, we will simply write $w_v=w(v)$.
 
We denote by $B(T)$ the tree obtained from $T$ by contracting all vertices from levels $0$ and $1$  into a new root vertex $r_B$: for all $u_1 \in T_1$ and $u_2 \in T_2$, every edge $ru_1$ is contracted and every edge $u_1u_2 \in E(T)$ gives rise to an edge $r_Bu_2 \in E(B(T))$. For $k\leq h(T)$,   $B^k(T)$ will denote the $k^{\rm th}$ iteration of $B$ applied to $T$: all vertices from levels $0$ to $k$ are contracted into a single vertex denoted by $r_{B^k}$ which becomes the new root. 

Given a predicate $P$, we denote by $\mathds 1_{P}$ the associated characteristic function so that $\mathds 1_{P(x)}=1$ if $P(x)$ is true and $0$ otherwise.

\section{Problems and preliminary results}\label{sec:problems}

\subsection{{\sc Firefighter} and {\sc Fractional Firefighter}}

An instance of the {\sc Fractional Firefighter}
is defined by a triple $(G,r,(f_i))$, where $G=(V(G),E(G))$ is a finite graph, $r\in V(G)$ is the vertex where the fire breaks out and $(f_i)_{i\geq 1}$ is the non-negative {\em firefighter sequence}, ($f_i$ indicates the amount of protection that can be placed at turn $i$). Note that the game could be extended by allowing negative values for $f_i$, however, we will exclude pyromaniac firefighters from this paper, with one exception in \Cref{SST} for the purpose of simplifying a proof. Let $S_i$ denote the cumulative amount of firefighters received: $S_i=\sum_{j=1}^i f_j$.

Turn $i=0$ is the initial state where $r$ is burning and all other vertices are unprotected, and $i\geq 1$ corresponds to the different rounds of the game. At each turn $i\geq 1$ and for every vertex $v$, the player decides which amount $p_i(v)$ of protection to add to $v$, with $0\leq p_i(v)\leq 1$. Throughout the game, for every vertex $v$ the part of  $v$ which is burning at turn $i\geq 0$ is denoted by $b_i(v)$, with $0\leq b_i(v)\leq 1$, $b_0(r)=1$ and $b_0(v)=0$ for all $v\neq r$. Similarly the cumulative protection received by vertex $v$ is $p^c_i(v)$ with $p^c_0(v)=0$ for all $v$. Both $(b_i(v))_{i\geq 0}$ and  $(p^c_i(v))_{i\geq 0}$ are non-decreasing sequences with $b_i(v)+p^c_i(v)\leq 1$ for all $i$ and $v$. At each turn $i$, the player's choice of $p_i(v)$ is subject to the constraints $p_i(v)\geq 0$, $b_{i-1}(v)+p^c_{i-1}(v)+p_i(v)\leq 1$ and $\sum_{v\in V(G)}p_i(v)\leq f_i$. The new protection of $v$ is $p^c_i(v)=p^c_{i-1}(v)+p_i(v)$. The fire then spreads following the rule 
\begin{equation}\label{eq:fire spread}
b_{i}(v)=\max \{\max_{v'\in N(v)} b_{i-1}(v')-p^c_i(v),b_{i-1}(v)\},
\end{equation}
where $N(v)$ denotes the open neighbourhood of $v$. The game finishes when the fire stops spreading (i.e. $b_{i}(v)=b_{i-1}(v)$ for all $v$). 

The standard {\sc Firefighter} problem is similar to {\sc Fractional Firefighter}, but with the additional constraint that the $p_i(v)$ are all binary variables. It follows that $p^c_i(v)$ and $b_i(v)$ are also binary. In this case, we require that the firefighter sequence has integral values.

We will now show that both versions of the game always terminate on a finite graph $G$. Let $L_r(G)$ denote the maximum length of an induced path in $G$ with extremity $r$, we have:  

\begin{proposition}\label{prop:terminates}
The maximum number of turns before a game of {\sc Firefighter} or {\sc Fractional Firefighter} on a finite graph $G$ will terminate is $L_r(G)$.
\end{proposition}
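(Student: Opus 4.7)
The plan is to prove the bound via a backward trace of the fire. First I would record that $b_i(v)$ and $p^c_i(v)$ are non-decreasing in $i$ and bounded above by $1$, so the spread rule shows that once $b_i(v)=b_{i-1}(v)$ holds for every $v$ at some turn $i$, this equality persists for all subsequent turns, at which point the game has terminated. It therefore suffices to prove the \emph{backward trace lemma}: if $b_T(v)>b_{T-1}(v)$ for some turn $T\geq 1$ and vertex $v$, then $G$ contains an induced path of length $T$ from $r$ to $v$, which forces $T\leq L_r(G)$.

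To prove this lemma, I would construct $v_0=r,v_1,\ldots,v_T=v$ by backward induction starting from $v_T=v$. At each step $j$, the strict inequality $b_j(v_j)>b_{j-1}(v_j)$ combined with the spread rule forces some neighbour $v_{j-1}\in N(v_j)$ with $b_{j-1}(v_{j-1})>b_{j-1}(v_j)+p^c_j(v_j)$; a short comparison of the spread rules at turns $j$ and $j-1$ together with the monotonicity of $p^c$ then lets me arrange that the chosen $v_{j-1}$ itself satisfies $b_{j-1}(v_{j-1})>b_{j-2}(v_{j-1})$ for $j\geq 2$ (otherwise $\max_{u\in N(v_j)}b_{j-1}(u)=\max_{u\in N(v_j)}b_{j-2}(u)$ and $b_j(v_j)\leq b_{j-1}(v_j)$, a contradiction). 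The base case $j=1$ forces $v_0=r$, since only $r$ has positive fire at turn $0$. This produces a walk of length $T$ from $r$ to $v$.

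The delicate step is upgrading this walk to an \emph{induced} path. I would select $v_{j-1}$ at each step to maximise $b_{j-1}$ among the neighbours of $v_j$, and then rule out any chord $v_j\sim v_k$ with $j<k-1$: since $b_j(v_j)>0$, the spread rule applied to $v_k$ at turn $j+1$ combined with monotonicity yields $b_{k-1}(v_{k-1})\geq b_{k-1}(v_j)\geq b_j(v_j)$, and iterating estimates of this form along the chain of strict fire-increase events forces a contradiction with either the strict inequality $b_{k-1}(v_{k-1})>b_{k-1}(v_k)+p^c_k(v_k)$ produced earlier or the global bound $b_i(v)+p^c_i(v)\leq 1$. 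A coincidence $v_j=v_k$ is excluded by the same mechanism. With an induced path of length $T$ exhibited, $T\leq L_r(G)$ is immediate by definition of $L_r(G)$. For tightness, choosing an induced $r$-path of maximum length together with a firefighter sequence concentrating enough protection on its off-path neighbours forces the fire to march down the full length of the path, realising $L_r(G)$ turns.

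I expect the main obstacle to be the inducedness step: the fractional setting admits partial protections and partial fire transmissions, so the chord-ruling argument must carefully track how the strict gains $b_{j-1}(v_{j-1})-b_{j-1}(v_j)-p^c_j(v_j)>0$ accumulate along the backward chain and play them off against the constraint $b+p^c\leq 1$. Getting the walk is essentially automatic from the spread rule; promoting it to an induced path is the part that will need careful bookkeeping.
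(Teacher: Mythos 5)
Your plan is essentially the paper's proof: both trace the fire backwards from a vertex whose burning amount strictly increases at turn $T$, produce a path from $r$ along which the values $b_j(v_j)$ are non-increasing, and use exactly that monotone invariant together with the monotonicity of $p^c$ to rule out chords, with the same tightness construction (one big first turn protecting everything off a longest induced $r$-path). The only differences are packaging (your backward recursion versus the paper's forward induction on $i$) and that your written chord inequality is garbled --- the clean version is the invariant $b_j(v_j)\geq b_{k-1}(v_{k-1})$ combined with $b_{k-1}(v_k)\geq b_{j+1}(v_k)\geq b_j(v_j)-p^c_{j+1}(v_k)\geq b_{k-1}(v_{k-1})-p^c_{k}(v_k)>b_{k-1}(v_k)$, which is precisely how the paper closes that step.
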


\begin{proof}

First,  
we show by induction on $i$ that: 

\begin{quote}
For all $i\geq 1$, for all vertex $v$, if $b_{i}(v)>b_{i-1}(v)$, then there is an induced path $P_{i,v}=(u_0=r,u_1,\ldots,u_i=v)$ of length $i$ such that $b_{j}(u_j)$ is non-increasing along the path.
\end{quote}

For $i=1$ and $v\in V(G)$, if $b_{1}(v)>b_0(v)$  then $v$ is a neighbour of $r$ and $b_0(r)=1\geq b_1(v)$. The path $P_{1,v}=(r,v)$ is of length 1.

Suppose the property holds at turn $i$ and that $b_{i+1}(v)>b_i(v)$ for some $v$. Necessarily, $v$ receives the additional amount of fire from a neighbour $w$: 
$$b_{i+1}(v)=b_i(w)-p^c_{i+1}(v)>b_i(v).$$ 

We necessarily have $b_i(w)>b_{i-1}(w)$ since otherwise $b_i(w)=b_{i-1}(w)$ and we would have the following contradiction: 
$$b_i(v)\geq b_{i-1}(w)-p^c_i(v)= b_{i}(w)-p^c_i(v)
\geq b_{i}(w)-p^c_{i+1}(v)>b_i(v).$$

Applying to $w$ the induction hypothesis, there is an induced path $P_{i,w}=(u_0=r,u_1,\ldots,u_i=w)$ such that $(b_j(u_j))$ is non-increasing. Since, $b_{i+1}(v)=b_i(w)-p^c_{i+1}(v)$, we have $b_{i+1}(v)\leq b_i(w)$.

Also, for all $j<i$, $u_j$ and $v$ are not adjacent since otherwise, we would have the following contradiction:

$$b_i(v)\geq b_{j+1}(v)\geq b_j(u_j)-p^c_{j+1}(v)\geq b_i(w)-p^c_{i+1}(v)>b_i(v).$$

It follows that the path $P_{i+1,v}$ obtained by adding the edge $wv$ to $P_{i,w}$ is an induced path which satisfies the required property.\\ 
Thus, if $b_{i+1}(v)>b_i(v)$ for some $v$, we have $i+1\leq L_r(G)$. Consequently, the fire can no longer spread at turn $L_r(G)+1$.

\bigskip
Conversely, let $P$ be an induced path with extremity $r$ of length $L_r(G)$. If $(f_i)=(|G|-L_r(G)-1,0,0,0 \cdots)$ and if the complement of $P$ is protected during the first turn, the game will terminate in exactly $L_r(G)$ turns.
\end{proof}

\begin{example}
On a perfect binary tree $(B_n,r)$ of height $n$,  given one firefighter per turn, the length of {\sc (Fractional) Firefighter} is exactly $n=L_r(B_n)$, whatever the player's strategy (this will be an immediate consequence of \Cref{spherically symmetric trees}).
\end{example}

\subsection{Simplification for Trees}\label{sec:simple-tree}

In this paper, we focus on the case of trees. Given an instance $(T,r,(f_i))$, where $T$ is a tree, $T$ will be considered rooted in $r$. In order to remove trivial cases, we will exclude algorithms which place at turn $i$ more protection on a vertex $v$ than the part of $v$ that would burn if no protection were placed starting from turn $i$.  If $T$ is finite, \Cref{prop:terminates} implies that the game will end  in at most $h(T)$ turns. We consider that it has exactly $h(T)$ turns, eventually with empty turns where no firefighters are allocated towards the end of the game.

Solutions on trees have a very specific structure.
Indeed, it immediately follows from \Cref{eq:fire spread} that, when playing on a tree,
at each turn $i$, the amounts of fire $b_i(v)$ are non-increasing along any path from the root, which means that the fire will only spread outwards from the root.  
Also, for every vertex $v$ in $T_j$ for some $j$, the amount of fire $b_j(v)$ can no longer increase after turn $j$. Hence,  
no protection is placed in $T_j$ at turn $i>j$. Note also that for any  
solution  
which allocates a positive amount of protection at turn $i$ to a vertex $v\in T_k, k>i$, allocating the same amount of protection to the parent of $v$ instead strictly improves the performance. Indeed, if vertex $v$ can still burn, so can its parent.  So we may consider only  algorithms that play in $T_i$ at turn $i$.  For an optimal  
algorithm, this property was emphasised in~\cite{Hartnell.Li}. 

This holds for both {\sc Firefighter} and {\sc Fractional Firefighter} on trees. For such an algorithm, $p^c_i(v)=p_i(v)$ and the values of $p_i(v)$ and $b_i(v)$ will not change after turn $i$ for $v\in T_i$. Hence, the index $i$ may be dropped by denoting  $p(v)=p_i(v)$ and $b(v)=b_i(v)$ for $v\in T_i$. A solution $p$ is then characterised by the values $p(v), v\in V(T)$. 
For any solution $p$, while $p(v)$ represents the amount of protection received directly, vertex $v$ also receives protection through its ancestors, the amount of which is denoted by $P_p(v)=\sum_{v'\lhd v} p(v')$ (used in \Cref{sec:greedy}). Since we only consider algorithms that play in $T_i$ at turn $i$ and do not place extraneous protection, for any vertex $v$, $p(v)+P_p(v)\leq 1$. Also, for any vertex $v\in T_i$, we have $b_{i-1}(v)=0$ and by summing  \Cref{eq:fire spread} from $1$ to $i$, we deduce that $p(v)+P_p(v)+b(v)=1$. 

Any  
solution $p$ for {\sc Firefighter} or {\sc Fractional Firefighter} will satisfy the constraints:
\begin{center}
 $[{\cal C}]\left\{\begin{array}{ll}
\sum_{v\in T_i} p(v)\leq f_i& \;\; (i) \\
\forall v, p(v)+P_p(v)\leq 1 & \;\; (ii)
\end{array}\right.$
\end{center}

In~\cite{MacGillivray.wang}, a specific boolean linear model has been proposed for solving {\sc Firefighter} on a tree $T$ involving these constraints. Solving {\sc Fractional Firefighter} on $T$ corresponds to solving the relaxed version of this linear program.

\subsection{Online version}\label{sec:online-version}

Online optimisation~\cite{paschos-online} is a generalisation of approximation theory which represents situations where the information is revealed over time and one needs to make irrevocable decisions. Following the definitions of~\cite{albers},
we now introduce  online versions of {\sc Firefighter} and 
{\sc Fractional Firefighter} on trees. The graph and starting point of the fire are known from the start by the player,  
 but the firefighter sequence $(f_i)_{i\geq 1}$ is revealed over time, i.e., at step $i$ the player does not know any request $f_j$ with $j>i$. This set-up can be seen as a game between the online player (or algorithm) and an  {\em oblivious adversary}.     
At each turn $i$, the oblivious adversary reveals $f_i$ and then the player chooses where to allocate this resource.  We refer to the usual case, where $(f_i)_{i\geq 1}$ is known in advance by the player, as {\em offline}.

Let us consider an online algorithm $\mathbf{OA}$ for one of the two problems and let us play the game on a finite tree $T$ until the fire stops spreading. The value $\lambda_{\mathbf{OA}}$ achieved by the algorithm, defined as the amount of saved vertices, is measured against the best value performed by an algorithm which knows in advance the sequence $(f_i)$. In the present case, it is simply the optimal value of the offline instance, referred to as the {\em offline optimal value}, denoted by  $\beta_I$ when considering the online {\sc  Firefighter} ($I$ stands for ``Integral") and  $\beta_F$ for the online {\sc  Fractional Firefighter}. We will call $\mathbf{Bob}$ such an algorithm, able to see the future and guaranteeing the value $\beta_I$ or $\beta_F$ for online {\sc  Firefighter} and {\sc Fractional Firefighter}.   

Algorithm $\mathbf{OA}$ is said to be $\gamma$-competitive, $\gamma\in ]0,1]$,  for the online {\sc  Firefighter} (resp. {\sc  Fractional Firefighter}) if for every instance, $\frac{\lambda_{\mathbf{OA}}}{\beta_I}\geq \gamma$ (resp. $\frac{\lambda_{\mathbf{OA}}}{\beta_F}\geq \gamma$);  
$\gamma$ is also called a {\em competitive ratio} guaranteed by $\mathbf{OA}$. An online algorithm will be called {\em optimal} if it guarantees the best possible competitive ratio. It should be noted that to calculate the competitive ratio, it is necessary to evaluate the worst case of all the adversary's possible choices. This is actually like considering a malicious adversary instead of an ignorant one.

The arguments given in \Cref{sec:simple-tree}, to justify considering only algorithms that play in $T_i$ at turn $i$, are still valid for online algorithms; thus we will only consider such online algorithms. Let us start with a reduction:

\begin{proposition}\label{first turn} We can reduce online {\sc (Fractional) Firefighter} on trees to  instances where $f_1>0$.
\end{proposition}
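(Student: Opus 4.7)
The plan is to exploit the fact that when $f_1=0$, constraint $(i)$ of $[{\cal C}]$ forces $p(v)=0$ for every $v\in T_1$, so the entire first level is doomed to burn regardless of the player's strategy. Once this happens, the residual game from turn $2$ onward is structurally identical to a fresh game played on $B(T)$ rooted at $r_B$ with the shifted sequence $(f_{i+1})_{i\geq 1}$: the children of $r_B$ in $B(T)$ are exactly the vertices of $T_2$, which are precisely the vertices that would begin to face the fire at turn $2$ on $T$ (as children of the now-burning vertices of $T_1$).

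To convert this observation into a reduction, I would set up a correspondence between valid strategies. Any strategy $p$ on $(T,r,(f_i))$ with $f_1=0$ must satisfy $p|_{T_1}\equiv 0$, and its restriction to $V(T)\setminus(\{r\}\cup T_1)$ is a valid strategy on $(B(T),r_B,(f_{i+1})_{i\geq 1})$; conversely, any strategy on $B(T)$ extends by $0$ on $T_1$. Under this correspondence, both strategies save exactly the same set of vertices (those in $T_2\cup T_3\cup\cdots$ left unburnt), so the offline optima $\beta_I$ (resp.\ $\beta_F$) agree between the two instances.

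For the online version, I would then observe that when the adversary reveals $f_1=0$, the online algorithm is forced to take no action, so its information state at the beginning of turn $2$ on $T$ is in bijection with the information state at the beginning of turn $1$ on $B(T)$ with sequence $(f_2,f_3,\ldots)$. Consequently any online algorithm for the shifted instance directly yields an online algorithm for the original one with identical value $\lambda_{\mathbf{OA}}$, and combined with the equality of offline optima this preserves the competitive ratio exactly. If $f_2=0$ as well, I would iterate: pass to $(B^2(T),r_{B^2},(f_{i+2})_{i\geq 1})$, and so on. Since the game on a finite tree ends in at most $h(T)$ turns, this iteration stops in at most $h(T)$ steps, producing either an instance $(B^{k-1}(T),r_{B^{k-1}},(f_i)_{i\geq k})$ with $f_k>0$, or a degenerate instance in which the entire tree burns and $\beta=0$ (vacuously competitive).

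The only mildly delicate point is confirming that this reduction is compatible with the oblivious-adversary semantics of \Cref{sec:online-version}, but this is immediate: during an all-zero prefix, the adversary's revelation is forced, no decisions are taken by the player, and no state is carried beyond the identity of the subtree on which the nontrivial game begins.
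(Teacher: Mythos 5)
Your proposal is correct and follows exactly the same route as the paper, whose entire proof is the one-line observation that $(T,r,(f_i))$ with $f_i=0$ for $i\leq k$ is equivalent to $(B^k(T),r_{B^k},(f_{i+k}))$; you have simply spelled out the strategy correspondence, the equality of offline optima, and the compatibility with the online semantics that the paper leaves implicit.
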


\begin{proof}  If $f_i=0$ for all $i$ such that $1\leq i\leq k$, then the instance $(T,r,(f_i))$ is equivalent to the instance $(B^k(T),r_{B^k},(f_{i+k}))$.
\end{proof}

In the infinite case we do not define competitive ratios, but only ask whether the fire can be contained by an online algorithm. \Cref{sec:greedy,sec:improved} deal with the finite case
while \Cref{sec:linear} deals with a class of infinite trees.

\section{Online Firefighting on Finite Trees}\label{sec:newtrees}

\subsection{Competitive analysis of a Greedy algorithm}\label{sec:greedy}

Greedy algorithms are usually very good candidates for online algorithms, sometimes the only known approach.
Mainly two different greedy algorithms have been considered in the literature for {\sc Firefighter} on a tree~\cite{Finbow.MacGillivray} and they are both possible online strategies in our set-up. The {\em degree greedy} strategy prioritises saving  
vertices of large degree; it has been shown in~\cite{bazgan-chopin-ries}  that  
it cannot guarantee any approximation ratio 
on trees, even for a constant firefighter sequence. 
A second greedy algorithm, introduced in~\cite{Hartnell.Li} for an integral sequence $(f_i)$,   
maximises at each turn the total weight of the newly protected vertices. We generalise it to any firefighter sequence  
for both the integral and the fractional problems. 
Let $\mathbf{GR}$ denote the greedy algorithm that selects at each turn $i$ an optimal solution of the linear program ${\cal P}_i$ with variables $x(v), v\in T_i$ and constraints $[{\cal C}]$:
 
$${\cal P}_i:\left\{\begin{array}{ll}
\max \sum_{v\in T_i} x(v)w(v)& \\
\sum_{v\in T_i} x(v)\leq f_i& \;\; (i) \\
\forall v\in T_i, x(v)+P_x(v)\leq 1 & \;\; (ii)
\end{array}\right.$$

An optimal solution of ${\cal P}_i$ is obtained by ordering vertices $\{v_1, \ldots, v_{|T_i|}\}$ of level $i$ by non-increasing weight and taking them one by one in this order and greedily assigning to vertex $v_j$ the value $x(v_j)=\min(f_i-\sum_{k<j}x(v_k),1-P_x(v_j))$. 
Note that $\mathbf{GR}$ is valid for both {\sc Firefighter} and {\sc Fractional Firefighter}.

It was shown in \cite{Hartnell.Li} that the greedy algorithm on trees gives a $\frac 12$-approximation of the restriction of {\sc Firefighter} when a single firefighter is available at each turn. They claim that this approximation ratio remains valid for a fixed number $D\in \mathbb{N}$ of firefighters at each turn. We extend this result to any firefighter sequence $(f_i)_{i\geq 1}$, integral or not. Since $\mathbf{GR}$ is an online algorithm, the performance can also be seen as a competitive ratio for the online version.

\begin{theorem}\label{prop:greed}
The greedy algorithm $\mathbf{GR}$ is $\frac 12$-competitive for  both online {\sc Firefighter} and {\sc Fractional Firefighter} on finite trees.
\end{theorem}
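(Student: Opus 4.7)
The plan is to show $\beta \leq 2\lambda$, where $\lambda = \sum_v p(v) w(v)$ is the value achieved by $\mathbf{GR}$ and $\beta = \sum_v p^*(v) w(v)$ is the offline optimum achieved by some solution $p^*$. The main tool is the per-level optimality of $\mathbf{GR}$ for $\mathcal{P}_i$, which provides a lower bound on $G_i := \sum_{v \in T_i} p(v) w(v)$ from any feasible solution of $\mathcal{P}_i$.

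I would split $\beta = O^{(1)} + O^{(2)}$ with
$$O^{(2)} := \sum_{v :\, P_p(v) = 0} p^*(v) w(v), \qquad O^{(1)} := \sum_{v :\, P_p(v) > 0} p^*(v) w(v).$$
At each level $i$, the vector $z_i(v) := p^*(v) \mathds{1}_{P_p(v) = 0}$ is feasible for $\mathcal{P}_i$: the budget $\sum_{v \in T_i} z_i(v) \leq \sum_{v \in T_i} p^*(v) \leq f_i$ holds, and the capacity is either $p^*(v) \leq 1$ (when $P_p(v)=0$) or $P_p(v) \leq 1$ (otherwise). Hence $G_i \geq \sum_{v \in T_i} z_i(v) w(v)$; summing over $i$ gives $\lambda \geq O^{(2)}$, so it remains to prove $O^{(1)} \leq \lambda$.

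The key ingredient for bounding $O^{(1)}$ is the subtree inequality $\sum_{v \in T[v']} p^*(v) w(v) \leq w(v')$, valid for every $v' \in V(T)$: this sum equals the total fraction of $T[v']$ that is saved by $p^*$-protections placed inside $T[v']$, which is at most the number of vertices $|T[v']| = w(v')$ by the path-feasibility of $p^*$. In the integer case $V^+ := \{v' : p(v')=1\}$ is an antichain, the set $\{v : P_p(v)>0\}$ partitions disjointly into the strict descendant sets of the elements of $V^+$, and summing the subtree bound over $V^+$ yields $O^{(1)} \leq \sum_{v' \in V^+} w(v') = \lambda$.

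The hardest step is the fractional case, where $V^+ = \{v' : p(v')>0\}$ need no longer be an antichain and $\sum_{V^+} w(v')$ may strictly exceed $\lambda$. I would resolve this by restricting to the minimal elements $\hat V$ of $V^+$ (which \emph{do} form an antichain) and applying the subtree bound on each $T[\hat v]$ only, combined with a fractional charging that distributes each contribution $p^*(v) w(v)$ along the greedy-protected ancestors of $v$ in proportion to their $p$-values. Using $P_p(v) \geq p(v')$ for $v' \lhd v$ and the path feasibility of $p$ itself ($\sum_{v'' \unlhd v'} p(v'') \leq 1$) should show that the total charge received by each $v'$ is bounded by $p(v') w(v')$, whence $O^{(1)} \leq \lambda$. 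Once this is in hand, $\beta \leq 2\lambda$ follows and, since $\mathbf{GR}$'s move at turn $i$ depends only on $f_1, \ldots, f_i$, the same inequality proves the $1/2$ competitive ratio against an oblivious adversary for both online {\sc Firefighter} and online {\sc Fractional Firefighter}.
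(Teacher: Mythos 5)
Your decomposition $\beta=O^{(1)}+O^{(2)}$ is sound in two of its three parts: the feasibility of $z_i(v)=p^*(v)\mathds 1_{P_p(v)=0}$ for ${\cal P}_i$ and hence $O^{(2)}\leq\lambda$ is correct, and the integer case of $O^{(1)}\leq\lambda$ (antichain $V^+$ plus the subtree inequality $\sum_{v\in T[v']}p^*(v)w(v)\leq w(v')$) is a clean, correct argument -- essentially the Hartnell--Li reasoning. The gap is exactly where you flag it, and the charging you sketch for the fractional $O^{(1)}$ does not work. Concretely: let $r$ have a single child $u$ with $w(u)=W+1$, let $u$ have a single child $v$ heading a subtree with $w(v)=W$, and take $(f_i)=(\epsilon,\,1-\epsilon,\,0,\dots)$. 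Then $\mathbf{GR}$ sets $p(u)=\epsilon$ and $p(v)=1-\epsilon$, and any optimal $p^*$ does the same. Since $P_p(v)=\epsilon>0$, the entire term $p^*(v)w(v)=(1-\epsilon)W$ lands in $O^{(1)}$ and must be charged to the unique greedy-protected ancestor $u$; but your claimed cap on the charge received by $u$ is $p(u)w(u)=\epsilon(W+1)$, which is arbitrarily smaller. The inequality $O^{(1)}\leq\lambda$ survives in this example only because $\lambda$ also contains the greedy's \emph{own} level-$2$ term $(1-\epsilon)W$ -- i.e.\ the level-LP optimality that your scheme reserves exclusively for $O^{(2)}$. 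The root cause is that the indicator split $\mathds 1_{P_p(v)>0}$ is too coarse: a vertex with $P_p(v)=\epsilon$ is treated as if fully covered by greedy ancestors, when only an $\epsilon$-fraction of it is.

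The repair is to split the \emph{amount} $p^*(v)$ rather than the \emph{set} of vertices, which is what the paper does: write $p^*(v)=g(v)+h(v)$ with $g(v)=\min\{p^*(v),\max\{0,P_p(v)-P_{p^*}(v)\}\}$ the part genuinely covered by greedy ancestors (so $g(v)\leq P_p(v)$ pointwise, and an induction along root-to-leaf paths gives $\sum_{v\unlhd v'}g(v)\leq P_p(v')$, whence $\sum_v g(v)w(v)\leq\lambda$ by the same change of summation order you use for the subtree inequality), and $h(v)$ the remainder, which one checks satisfies both constraints of ${\cal P}_i$ (in particular $h(v)+P_p(v)\leq 1$), so that level-LP optimality gives $\sum_{v\in T_i}h(v)w(v)\leq G_i$ at \emph{every} level, not just on the $P_p=0$ part. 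Your proposal is therefore the same two-bucket skeleton as the paper's proof -- one bucket absorbed by ancestor protection, one by per-level optimality -- but with the buckets defined by a set indicator instead of a fractional truncation; the former cannot be rescued by a proportional charging of the kind you describe, while the latter closes the fractional case and subsumes your integer argument.
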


\begin{proof}

Let us first consider the fractional case with an online instance $(T,r,(f_i))$ of {\sc Fractional Firefighter} on a tree. 

Let $x(v)$ and $y(v)$ be the amounts of firefighters placed on vertex $v$ by $\mathbf{GR}$ and $\mathbf{Bob}$, respectively. We have $\lambda_{\mathbf{GR}}=\sum_v x(v)w(v)$ and $\beta_F=\sum_v y(v)w(v)$. 

Recall that $P_x(v)=\sum_{v'\lhd v} x(v')$ and  $P_y(v)=\sum_{v'\lhd v} y(v')$.
We split $y(v)$ into two non-negative quantities, $y(v)=g(v)+h(v)$, where $g(v)$ is the part of $y(v)$ already protected by $\mathbf {GR}$ through the ancestors of $v$, while $h(v)$ is the part of $y(v)$ which, when added on top of $P_y(v)$, exceeds $P_x(v)$. So, if $P_y(v)<P_x(v)<y+P_y(v)$, we have $h(v)=y(v)-P_x(v)$. The general formula is: \\ $g(v)=\min\{y(v),\max\{0,P_x(v)-P_y(v)\}\}$ and $h(v)=\max\{0,y(v)+\min\{0,P_y(v)-P_x(v)\}\}.$

We now claim that $\forall v'\in T$, $\sum_{v\unlhd v'} g(v) \leq P_x(v')$ and prove it by induction.
Since $g(r)=0$, it holds for the root $r$. Assuming that the inequality holds for a vertex $v'$, let $v''$ be a child of $v'$. If $P_x(v'')-P_y(v'')\geq 0$, then we directly have:
$$\sum_{v\unlhd v''} g(v)=\sum_{v\lhd  v''} g(v) + g(v'') \leq \sum_{v\lhd  v''} y(v) + (P_x(v'')-P_y(v''))=P_x(v'').$$

Else, $g(v'')=0$ and using  $\sum_{v\unlhd v'} g(v) \leq P_x(v')$ 
  and  $P_x(v'')\geq P_x(v')$, the inequality holds for $v''$; which completes the proof of the claim. Thus:
$$ \sum_{v'}\sum_{v\unlhd v'} g(v)\leq \sum_{v'} P_x(v')= \sum_{v'}\sum_{v\lhd v'} x(v)\leq \sum_{v'}\sum_{v\unlhd v'} x(v).$$

Since $w(v)=\sum_{v\unlhd v'} 1$, by changing the order of summation on both sides, we obtain:

\begin{equation}\label{eq1} \sum_{v} g(v)w(v)\leq \sum_v x(v)w(v)=\lambda_{\mathbf{GR}}.
\end{equation}

Let us now consider the coefficients $h(v)$. We claim that the coefficients $h(v)$ with $v\in T_i$  satisfy  the constraints $(i)$ and $(ii)$ of ${\cal P}_i$: indeed for $(i)$, we have $h(v)\leq y(v)$ and $y$ satisfies constraint $(i)$. For $(ii)$ note that $h(v)+P_x(v)=\max\{P_x(v),y(v)+\min\{P_x(v),P_y(v)\}\}\leq \max\{P_x(v),y(v)+P_y(v)\}\leq 1$.

Hence, $\forall i, \sum_{v\in T_i} h(v)w(v)\leq \sum_{v\in T_i} x(v)w(v)$ and therefore:
\begin{equation}\label{eq2} \sum_{v\in T} h(v)w(v)\leq \sum_{v\in T} x(v)w(v)=\lambda_{\mathbf{GR}}.\end{equation}

Finally, since $g(v)+h(v)=y(v)$, we conclude from \Cref{eq1,eq2} that $\beta_F\leq 2\lambda_{\mathbf{GR}}$.\\
Hence the Greedy algorithm is $\frac 12$-competitive for the online {\sc Fractional Firefighter} problem. Since the greedy algorithm gives an integral solution if $(f_i)$ has integral values and since $\beta_F\geq \beta_I$, it is also $\frac 12$-competitive for the {\sc Firefighter} problem. This concludes the proof of \Cref{prop:greed}.

\end{proof}

Conjecture 2.3 in~\cite{Hartke.Stephen}   (which is also Conjecture 3.5 in~\cite{Finbow.MacGillivray}) claims that there is a constant $\rho$ such that the optimal value of {\sc Fractional Firefighter} on trees is at most $\rho$ times the optimal value of {\sc Firefighter}. It was supported by extensive experimental tests \cite{Hartke.Stephen}, but finding such a constant and proving the ratio is one of the open problems proposed in~\cite{Finbow.MacGillivray} (Problem 7). \Cref{prop:greed} can be expressed as 
$\lambda_{\mathbf{GR}} \leq \beta_I \leq \beta_F \leq 2\lambda_{\mathbf{GR}}$,
which shows that $\rho=2$ is such a constant:

\begin{corollary}\label{OI/OF} In {\sc Fractional Firefighter} on trees, the amount of vertices saved is at most twice  the maximum number of vertices saved in {\sc Firefighter}.
\end{corollary}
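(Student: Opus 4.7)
The corollary follows almost immediately from the chain of inequalities
$$\lambda_{\mathbf{GR}} \leq \beta_I \leq \beta_F \leq 2\lambda_{\mathbf{GR}},$$
and my plan is to justify each of the three inequalities in turn, so the main work has already been done in \Cref{prop:greed}. The middle inequality $\beta_I \leq \beta_F$ is immediate: any feasible solution to \textsc{Firefighter} is a feasible solution of \textsc{Fractional Firefighter} with the same value, so the fractional optimum is at least the integral optimum.

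For the leftmost inequality, I would observe that since the instance is an instance of \textsc{Firefighter}, the firefighter sequence $(f_i)$ takes integral values. As noted just after the definition of $\mathbf{GR}$ (and used at the end of the proof of \Cref{prop:greed}), the greedy algorithm applied to an integral sequence outputs an integral vector $x(v) \in \{0,1\}$, hence produces a feasible solution of \textsc{Firefighter}. Therefore $\lambda_{\mathbf{GR}} \leq \beta_I$.

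For the rightmost inequality, I would simply appeal to the argument already carried out in the proof of \Cref{prop:greed}: it establishes $\beta_F \leq 2\lambda_{\mathbf{GR}}$ by splitting the optimal fractional solution $y = g + h$, showing $\sum_v g(v)w(v) \leq \lambda_{\mathbf{GR}}$ via an inductive claim on the tree and $\sum_v h(v)w(v) \leq \lambda_{\mathbf{GR}}$ via the fact that $h$ satisfies the greedy linear program's constraints level by level. Chaining the three inequalities yields $\beta_F \leq 2\beta_I$, which is exactly the claim. There is no real obstacle here; the only thing to be slightly careful about is to emphasise the integrality of the greedy solution when restricted to an integral firefighter sequence, since without this observation one only gets $\lambda_{\mathbf{GR}} \leq \beta_F$ rather than the stronger $\lambda_{\mathbf{GR}} \leq \beta_I$ that is needed to pass from the competitive bound to a comparison between the two problems' optima.
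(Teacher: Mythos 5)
Your proposal is correct and follows exactly the paper's reasoning: the paper derives the corollary from the same chain $\lambda_{\mathbf{GR}} \leq \beta_I \leq \beta_F \leq 2\lambda_{\mathbf{GR}}$, stated in the paragraph preceding the corollary as a restatement of \Cref{prop:greed}. Your explicit justification of each link, including the integrality of the greedy output on integral sequences, matches the observations already made at the end of the paper's proof of \Cref{prop:greed}.
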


\subsection{Improved Competitive Algorithm for {\sc Firefighter}}\label{sec:improved}

In this section,  
we investigate possible improvements for online strategies for {\sc Firefighter} on finite trees. Let $\varphi=\frac{1+\sqrt 5}{2}$ denote the golden ratio, satisfying 
$\varphi^2=\varphi+1$.	

For any integer $k\geq 2$, we denote by 
$\alpha_{I,k}$ the best possible competitive ratio for online {\sc Firefighter} on finite trees if at most $k$ firefighters are available in the entire game. We have:
$$\alpha_{I,k}=\mathop{\rm inf}\limits_{T\in {\cal T}}\  \mathop{\rm max}\limits_{\mathbf{OA}\in {\cal A}_L}\  \mathop{\rm min}\limits_{(f_i)\in \mathbb N^{\mathbb N^*}, \sum_if_i\leq k } \ \frac{\lambda_{\mathbf{OA}}}{\beta_I},$$ 
where ${\cal T}$ denotes the set of finite rooted trees, ${\cal A}_L$ the set of online algorithms for {\sc Firefighter} on finite trees and $\mathbb N^*$ denotes the set of positive integers.

Note that in the definition of $\alpha_{I,k}$, $\lambda_{\mathbf{OA}}$ and $\beta_I$ depend on $T$. Also, the maximum and the minimum are  well defined since on a finite tree $T$, the set of possible ratios is finite. An online algorithm, choosing for any fixed $T$ a strategy which achieves this maximum, will be $\alpha_{I,k}$-competitive for instances  with at most $k$ firefighters. Such an algorithm is optimal for these instances.

The sequence $\left(\alpha_{I,k} \right)$ is non-increasing. 
We define $\alpha_I=\lim\limits_{k\rightarrow\infty}\alpha_{I,k}$; again, the index $I$ stands for {\em Integral} and refers to the {\sc Firefighter} problem.

\begin{remark}
The limit $\alpha_I$ is the greatest competitive ratio that can be reached on any tree. 
Indeed, given a finite tree $T$, it suffices to consider the instances with at most $|V(T)|$ firefighters.
\end{remark}

In this section, we give an online algorithm for instances of {\sc Firefighter} on a finite tree that is optimal (i.e., $\alpha_{I,2}$-competitive)  
if at most two firefighters are presented.
Based on \Cref{first turn}, we may assume $f_1\neq 0$.   
If $f_1 = 2$, one firefighter will be called the first and the other one the second.
An online instance is then characterised by 
when the second firefighter is presented. It can be never if only one firefighter is presented or at the first turn if $f_1=2$. Note that this later case is trivial since an online algorithm can make the same decision as $\mathbf{Bob}$ by assigning both firefighters to  two unburnt vertices of maximum weights. Our algorithm works also in this case and will make this optimal decision. 

\begin{lemma}\label{claim:ab}
Let $a$ and $b$ be two vertices of maximum weights in $T_1$.
If $\sum_if_i\leq 2$, there is an optimal offline algorithm for {\sc Firefighter} which places the first firefighter on either $a$ or $b$.  
\end{lemma}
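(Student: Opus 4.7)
My plan is to enumerate the few possible shapes of a firefighter sequence with $\sum_i f_i\leq 2$ and to conclude by an exchange argument in the single non-trivial case. By \Cref{first turn} we may assume $f_1\geq 1$, so only three configurations remain: a single firefighter in total at turn $1$; both firefighters at turn $1$; or one firefighter at turn $1$ and one at some later turn $j\geq 2$.

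I would handle the two extreme configurations directly. With a single firefighter, the only sensible move is to protect a maximum-weight vertex of $T_1$, which we can take to be $a$. With $f_1=2$, an optimal offline algorithm protects two vertices of $T_1$ of maximum weight, which we can take to be $a$ and $b$, so the first firefighter is on $a$ or $b$.

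For the remaining configuration I would use an exchange argument. Let some optimal offline algorithm protect $u\in T_1$ at turn $1$ and $w$ at turn $j\geq 2$, and suppose $u\notin\{a,b\}$. Restricting to algorithms that do not place wasted protection (\Cref{sec:simple-tree}), $w\notin T[u]$, so $w$ has a unique ancestor $v\in T_1\setminus\{u\}$. I would then split on $v$: if $v\notin\{a,b\}$ swap $u$ for $a$; if $v=a$ swap $u$ for $b$; if $v=b$ swap $u$ for $a$. In every sub-case the new number of saved vertices becomes $w(a)+w(w)$ or $w(b)+w(w)$, each of which is at least $w(u)+w(w)$ since $a$ and $b$ have maximum weights in $T_1$. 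Hence the swapped solution is also optimal and places its first firefighter on $a$ or $b$.

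The hard part, and really the only point needing care, is checking that after the swap the vertex $w$ remains under threat at turn $j$; this will hold because in each sub-case the vertex newly protected at turn $1$ is deliberately chosen distinct from the ancestor $v$ of $w$, so $v$ still burns at turn $1$ and the fire propagates down $T[v]$ to $w$ by turn $j$, exactly as in the original solution. No further obstacles arise.
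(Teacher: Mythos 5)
Your proof is correct and takes essentially the same route as the paper's: an exchange argument observing that, with at most two firefighters, at least one of $a,b$ has its subtree untouched by the rest of the solution, so moving the first firefighter there from $u\in T_1\setminus\{a,b\}$ cannot decrease the number of saved vertices. Your explicit case split on the $T_1$-ancestor of the second protected vertex merely makes concrete the paper's choice of which of $a,b$ to swap in.
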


\begin{proof}
If the first firefighter is placed on $v\in T_1\setminus \{a,b\}$ by an optimal offline algorithm, since at most two firefighters are available,  
$\exists u\in \{a,b\}$, $T[u]$ burns completely.  
Hence, replacing $v$ by $u$ when assigning the first firefighter would produce another optimal solution (necessarily $w_v=w_u$).   
\end{proof}

We suppose $\mathbf{Bob}$ has this property. However, even if $w_a>w_b$, he will not necessarily choose $a$; as illustrated by the graph $W_{1,10,20}$ (\Cref{pincer}) where if the firefighter sequence is $(1,0,1,0,0,0\ldots)$, then $\mathbf{Bob}$'s needs to protect $x$ during the first turn. 
Note also that, when the root is of degree at least~3,  the second firefighter is not necessarily in $V(T[a])\cup V(T[b])$.

We now consider \Cref{algo:2firefighters} and assume that the adversary will reveal at most two firefighters.  
The algorithm works on an updated version $\widetilde T$ of the tree: if one vertex is protected, then the corresponding sub-tree is removed  
and all the burnt vertices are contracted into the new root $\tilde r$ so that the algorithm always considers vertices of level 1 in 
$\widetilde T$.  
Before starting the online process, the algorithm computes the weights of all vertices.  
The weights of the unburnt vertices do not change when updating $\widetilde T$.   
The value of $h(\widetilde T)$, required in line~\ref{line:Greedy test}, can be computed during the initial calculation of weights and easily updated  with $\widetilde T$. For the sake of clarity, we do not detail all updates in the algorithm. 

In this section only, for any vertex $v\in T_i$ and any $i\leq j \leq i+h(T[v])$, we denote by $v_j$ a vertex of maximum weight $w_{v_j}$ in $T_j\cap V(T[v])$, i.e. among the descendants of $v$ which are in level $j$ (or $v$ itself if $i=j$). We also define $\bar w_{v_j}$ for all $j$ via:
$$
\bar w_{v_j}=w_{v_j}\ {\rm if}\ j\in [i; i+h(T[v])]\ {\rm and}\ 0\ {\rm otherwise.}
$$
 
\begin{algorithm}[ht] 
\floatname{algorithm}{Algorithm}
\caption{}
\begin{algorithmic}[1]
\label{algo:2firefighters}
\REQUIRE{A finite tree $T$ with root $r$ \-- An online adversary.}
\STATE{$(\widetilde T,\tilde r)\leftarrow (T,r)$; Compute $w_v$, $\forall v\in V(\widetilde T)$}
\STATE{$First\_Firefighter \leftarrow TRUE$;}
\STATE{}\COMMENT{Start of the online process}
\STATE{At each turn, after the fire spreads, $\widetilde T$ is updated \-- burnt vertices are contracted to $\tilde{r}$;}
\STATE{If several firefighters are presented at the same time, we consider them one by one in the following lines;}
\IF{a new Firefighter is presented \AND $\tilde r$ has at least one child}
\IF{$First\_Firefighter$}
\STATE{Let $a$ and $b$ denote two  children of $\tilde r$ with  maximum weight $w_a,w_b$ and $w_a\geq w_b$ ($a=b$ if $\tilde r$ has only one child);}\label{line:ab}
\IF{$\min\limits_{2\leq i\leq 1+h(\widetilde T)}\frac{w_a+\bar w_{b_i}}{w_b+\bar w_{a_i}}\geq\frac 1\varphi$}\label{line:Greedy test}
		\STATE{Place the first firefighter on $a$;}\label{line:on a} 
		\ELSE
		\STATE{Place the first firefighter on $b$;} \label{line:on b} 
		\ENDIF
        \STATE{$First\_Firefighter \leftarrow FALSE$;}
\ELSE
\STATE{Place the firefighter on a child $v$ of $\tilde r$} of maximum weight\label{line:second firefighter}
\ENDIF
\ENDIF
\end{algorithmic}
\end{algorithm}

	\begin{theorem} \label{theo2}\Cref{algo:2firefighters} is a $\frac 1\varphi$-competitive online algorithm for online {\sc Firefighter} with at most two firefighters available. It is optimal for this case.
\end{theorem}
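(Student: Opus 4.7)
The proof splits into showing (i) that Algorithm~\ref{algo:2firefighters} is $1/\varphi$-competitive and (ii) that no online algorithm does better, which together give $\alpha_{I,2}=1/\varphi$.

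For (i), apply \Cref{first turn} to assume $f_1\geq 1$. The sub-case $f_1=2$ is immediate since the algorithm matches $\mathbf{Bob}$ by \Cref{claim:ab}. Otherwise exactly one firefighter arrives at turn~$1$ and at most one more at some turn $i\geq 2$. Introduce $X(i)=w_a+\bar w_{b_i}$, $Y(i)=w_b+\bar w_{a_i}$, and let $r=\min_i X(i)/Y(i)$ be the threshold used in line~\ref{line:Greedy test}. By \Cref{claim:ab}, $\mathbf{Bob}$'s optimum at turn~$i$ equals $\max\{w_a+M_i^{-a},\,w_b+M_i^{-b}\}$ where $M_i^{-v}=\max\{w_u:u\in T_i\setminus V(T[v])\}$. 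Crucially $M_i^{-a}\geq\bar w_{b_i}$ and $M_i^{-b}\geq\bar w_{a_i}$, and the algorithm's value after placing first on~$a$ (resp.~$b$) and then greedily (line~\ref{line:second firefighter}) is exactly $w_a+M_i^{-a}$ (resp.\ $w_b+M_i^{-b}$).

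I case-split on the algorithm's choice. If $r\geq 1/\varphi$ (algorithm plays~$a$), the one-firefighter ratio is trivially~$1$; for two firefighters a sub-case analysis on the sign of $M_i^{-a}-M_i^{-b}$ (noticing that $M_i^{-a}<M_i^{-b}$ forces $M_i^{-b}=\bar w_{a_i}$) yields ratio $\geq\min(1,X(i)/Y(i))\geq r\geq 1/\varphi$. If $r<1/\varphi$ (algorithm plays~$b$), a witness $i^*$ with $X(i^*)/Y(i^*)<1/\varphi$ together with $\bar w_{a_{i^*}}<w_a$ (since $a_{i^*}$ is a proper descendant of~$a$) immediately gives $w_b>w_a/\varphi$, handling the one-firefighter case. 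A parallel sub-case reduces the two-firefighter case to showing $X(i)/Y(i)\leq\varphi$ for every~$i$, which is the main obstacle. I would prove it by contradiction: assume $X(j)/Y(j)>\varphi$ for some $j\neq i^*$, and use the monotonicity of $i\mapsto\bar w_{a_i}$ and $i\mapsto\bar w_{b_i}$ on their supports (each level-$(i{+}1)$ descendant of~$v$ has a parent of strictly larger weight at level~$i$ in $T[v]$). For $j<i^*$, substituting $\bar w_{a_j}\geq\bar w_{a_{i^*}}$ and the lower bound on $\bar w_{a_{i^*}}$ derived from $X(i^*)/Y(i^*)<1/\varphi$ into $X(j)/Y(j)>\varphi$ yields $\bar w_{b_j}>\varphi w_a$, absurd since $\bar w_{b_j}<w_b\leq w_a$. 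For $j>i^*$, substituting $\bar w_{b_j}\leq\bar w_{b_{i^*}}$ and $\bar w_{a_j}\geq 0$ into the two opposing inequalities yields $\varphi w_b<w_a$, contradicting $w_b>w_a/\varphi$. The golden ratio appears precisely as the fixed point making these two regimes collide.

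For (ii), I construct a family $(T_k)_{k\geq 1}$ witnessing $\alpha_{I,2}\leq 1/\varphi$: each $T_k$ has root~$r$ with two children $a,b$ where $T[a]$ is a path on $k+1$ vertices (so $w_a=k+1$ and $\bar w_{a_3}=k-1$) and $T[b]$ is a star of $w_b$ vertices with $w_b$ slightly larger than $(k+1)/\varphi$ (in particular $T[b]$ has no level-$3$ vertex). By \Cref{claim:ab} any online algorithm commits at turn~$1$ to protecting $a$ or $b$. Against ``chose~$a$'' the adversary reveals $(1,0,1,0,\ldots)$: at turn~$3$ no level-$3$ vertex survives outside $T[a]$, so the algorithm saves only $w_a$, while $\mathbf{Bob}$ (protecting~$b$ first then~$a_3$) saves $w_b+\bar w_{a_3}$, yielding ratio $(k+1)/(w_b+k-1)\to 1/\varphi$ as $k\to\infty$. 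Against ``chose~$b$'' the adversary stops after the first firefighter, yielding ratio $w_b/w_a\to 1/\varphi$. Letting $k$ grow gives $\alpha_{I,2}\leq 1/\varphi$; combined with Part~(i) this proves $\alpha_{I,2}=1/\varphi$ and the optimality of Algorithm~\ref{algo:2firefighters}.
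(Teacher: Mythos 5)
Your proposal is correct and follows essentially the same route as the paper's proof: the same reduction to $f_1\geq 1$, the same case split on which of $a,b$ the algorithm protects and on whether $\mathbf{Bob}$'s second firefighter lands inside the already-protected subtree, and essentially the same lower-bound family (a heavy path against a lighter star of weight about $w_a/\varphi$, with adversary sequences $(1,0,0,\ldots)$ and $(1,0,1,0,\ldots)$, matching the paper's $W_{1,l,\lfloor\varphi l\rfloor}$). The only real variation is the crux of the case where the algorithm protects $b$ and $\mathbf{Bob}$ plays $a$ and then a descendant of $b$: you establish the uniform bound $X(j)/Y(j)\leq\varphi$ for all $j$ by contradiction (splitting on $j<i^*$ versus $j>i^*$), whereas the paper proves the pairwise product inequality $\frac{w_a+\bar w_{b_i}}{w_b+\bar w_{a_i}}\cdot\frac{w_b+\bar w_{x_j}}{w_a+\bar w_{b_j}}\geq\varphi^{-2}$ by splitting on $i\geq j$ versus $i<j$; both arguments rest on the same monotonicity of $i\mapsto\bar w_{v_i}$ and the identity $\varphi^2=\varphi+1$, and both are valid.
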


\begin{proof}

While \Cref{algo:2firefighters} runs feasibly on any instance, we limit the analysis to the case where at most two firefighters are available.
 If the adversary does not present any firefighter before the turn $h(T)$, both \Cref{algo:2firefighters} and  
 $\mathbf{Bob}$ cannot save any vertex and, by convention, the competitive ratio is~1. 
 
 Let us suppose that at least one firefighter is presented at some turn $k\leq h(T)$;  
 the tree still has at least one unburnt vertex. During the first $(k-1)$ turns,  the instance is updated to $(B^{k-1}(T),r_{B^{k-1}},(f_{i+k-1}))$. In the updated instance, at least one firefighter is presented during the first turn and the root has at least one child. \Cref{first turn} ensures that it is equivalent to the original instance.

If the root $r_{B^{k-1}}$ has only one child, line~\ref{line:ab} gives $a=b$ and \Cref{algo:2firefighters} selects $a$ at line~\ref{line:on a}. In the updated instance, all vertices are saved; so the competitive ratio is equal to 1.

Else, we have $a\neq b$ with $w_a\geq w_b$ (line~\ref{line:ab}). If the adversary presents a single firefighter for the whole game, then \Cref{algo:2firefighters} protects either $a$ or $b$. Meanwhile,  
 $\mathbf{Bob}$ will protect $a$, saving $w_a$ vertices. If $w_a\geq \varphi w_b$, then we have: 
	\begin{equation}\label{eq:big a}
	\forall i,  2\leq i\leq 1+h(T),  \frac{w_a+\bar w_{b_i}}{w_b+\bar w_{a_i}}\geq \frac{w_a}{w_b+w_a}\geq  \frac{\varphi w_b}{w_b+\varphi w_b}=\frac 1\varphi.
\end{equation}
So \Cref{algo:2firefighters} protects $a$ (line~\ref{line:Greedy test}), guaranteeing a competitive ratio of 1. Otherwise, if $w_b>\frac 1\varphi w_a$, even placing the firefighter on $b$ guarantees a ratio of at least $\frac 1\varphi$.

Suppose now that the adversary presents two firefighters.   
We consider two cases.\\
{\underline {Case (i)}}: If \Cref{algo:2firefighters} places the first firefighter on $a$ at line~\ref{line:on a}, and if the adversary presents the second firefighter at turn $i\geq k$, then the algorithm will save $w_a+\bar w_{x_i}$, for some $x\in T_{k}\setminus\{a\}$ such that 
 $\bar w_{x_i}=\max_{u\in T_k\setminus\{a\}}\bar w_{u_i}.$  
 For the same instance, $\mathbf{Bob}$  will save $w_v+\bar w_{y_i}$ for some $v\in \{a,b\}$ and $y\in T_{k}\setminus\{v\}$. If the two values are different (the optimal one is strictly better), then necessarily  $v=b$ and $y=a$. In this case the criterion of line~\ref{line:Greedy test} ensures that the related competitive ratio is at least $\frac 1\varphi$.\\
{\underline {Case (ii)}}: Suppose now that \Cref{algo:2firefighters} places the first firefighter on $b$ at line~\ref{line:on b}, and say the adversary presents the second firefighter at turn $j\geq k$. Lines~\ref{line:ab} and \ref{line:Greedy test} ensure that:  
\begin{equation}\label{eq:b chosen}
	\exists i,2\leq i\leq 1+h(T),  \frac{w_a+\bar w_{b_i}}{w_b+\bar w_{a_i}}<\frac 1\varphi.
\end{equation}
 Hence, we have $w_a<\varphi w_b$, since in the opposite case, \Cref{eq:big a} would hold.  
\Cref{algo:2firefighters} now saves $w_b+\bar w_{x_j}$, for $x\in T_{k}\setminus\{b\}$ such that 
$\bar w_{x_j}=\max_{u\in T_{k}\setminus\{b\}}\bar w_{u_j}$.
Meanwhile, 
 $\mathbf{Bob}$  selects $v\in \{a,b\}$ and, if it exists, $y_j$ for some $y\in T_{k}\setminus\{v\}$, for a total of $w_v+\bar w_{y_j}$ vertices saved.
If $y\neq b$, then $\bar w_{y_j}\leq \bar w_{x_j}$, by definition of $x$, and thus: 
 
 \begin{equation}\label{eq:xu not from b}
 	\frac{w_b+\bar w_{x_j}}{w_a+\bar w_{y_j}}\geq \frac{w_b+\bar w_{x_j}}{w_a+\bar w_{x_j}}\geq \frac{w_b}{w_a}>\frac 1\varphi.
 \end{equation}

Finally, if $y=b$, then $v=a$ and the competitive ratio to evaluate is $ \frac{w_b+\bar w_{x_j}}{w_a+\bar w_{b_j}} $. We claim that the following  holds:
  
  \begin{equation}\label{eq:double ratio}
  	\frac{w_a+\bar w_{b_i}}{w_b+\bar w_{a_i}}\times \frac{w_b+\bar w_{x_j}}{w_a+\bar w_{b_j}}\geq \frac{1}{\varphi^2}.
  \end{equation}

If $i\geq j$, then $\bar w_{a_i}\leq \bar w_{a_j}$ and since $a\neq b$, 
$\bar w_{a_j}\leq \bar w_{x_j}$. Hence:
$
\frac{w_b+\bar w_{x_j}}{w_b+\bar w_{a_i}}\geq \frac{w_b+\bar w_{x_j}}{w_b+\bar w_{a_j}}\geq 1
$ 
  and therefore:
 $$
\frac{w_a+\bar w_{b_i}}{w_b+\bar w_{a_i}}\times \frac{w_b+\bar w_{x_j}}{w_a+\bar w_{b_j}}\geq \frac{w_a+\bar w_{b_i}}{w_a+\bar w_{b_j}}\geq \frac{w_a}{w_a+w_b}.  
  $$
  
Now, if $i<j$, we get:
 $
 \frac{w_a+\bar w_{b_i}}{w_a+\bar w_{b_j}}\geq 1
 $
  and therefore:
 $$
\frac{w_a+\bar w_{b_i}}{w_b+\bar w_{a_i}}\times \frac{w_b+\bar w_{x_j}}{w_a+\bar w_{b_j}}\geq \frac{w_b+\bar w_{x_j}}{w_b+\bar w_{a_i}}  \geq \frac{w_b}{w_a+w_b}.  
  $$ 
  
In both cases, since $\frac{w_a}{w_a+w_b}\geq \frac{w_b}{w_a+w_b}\geq \frac{1}{1+\varphi}=\frac{1}{\varphi^2}$, we obtain  \Cref{eq:double ratio}. 
Now, \Cref{eq:double ratio,eq:b chosen} imply that in case (ii), when $y=b$, we also have 
$
\frac{w_b+\bar w_{x_j}}{w_a+\bar w_{b_j}}\geq \frac{1}{\varphi}.$
Together with \Cref{eq:xu not from b}, this concludes  case (ii) and shows that \Cref{algo:2firefighters} is $\frac 1\varphi$-competitive.
\end{proof}

Even though complexity analyses are not usually proposed for online algorithms, it is worth  noting that line~\ref{line:Greedy test} only requires the weights of vertices in $V(T[a])\cup V(T[b])$ and the maximum weight per level in $T[a]$ and $T[b]$. Hence, \Cref{algo:2firefighters} requires $O(|V(T[a])|+|V(T[b])|)$ to choose the position of the first firefighter and $O(|V(T)|)$ altogether.

We conclude this section with a hardness result justifying that the greedy algorithm $\mathbf{GR}$ is optimal and that \Cref{algo:2firefighters} is optimal if at most two firefighters are available. These hardness results will all be derived from the graphs $W_{k,l,m}$ (\Cref{pincer}).

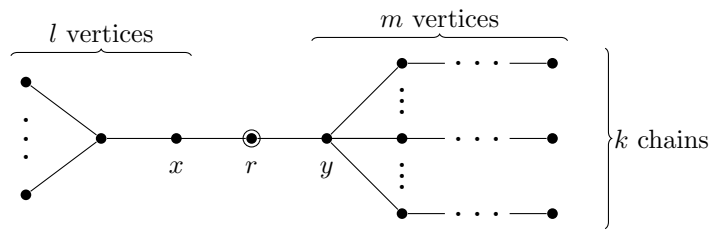
\begin{figure}[htb]
\begin{center}
	\begin{tikzpicture}

		\node[fill,circle,inner sep = 0.05 cm] (r) at (0,0) {};
		\node[draw,circle,inner sep=0.08cm] at (0,0) {};
		\node[below = 0.1 cm of r] () {$r$};

		\node[fill,circle,inner sep = 0.05 cm] (arbre) at (-2,0) {};
		\draw (arbre) to (r);
		\node[below = 0.1 cm of arbre] () {};

		\node[fill,circle,inner sep = 0.05 cm] (x) at (-1,0) {};
		\node[below = 0.1 cm of x] () {$x$};
      	
        \node[fill,circle,inner sep = 0.05 cm] (y) at (1,0) {};
		\node[below = 0.1 cm of y] () {$y$};

		\node[fill,circle,inner sep = 0.05 cm] (a1) at (-3,0.75) {};
		\node[fill,circle,inner sep = 0.05 cm] (a2) at (-3,-0.75) {};
		\node at (-3,0.25) {\textbf{.}};
		\node at (-3,0) {\textbf{.}};
		\node at (-3,-0.25) {\textbf{.}};
		\draw (a1) to (arbre) to (a2);

		\draw[decorate,decoration={brace,raise=0.1cm}] (-3.2,1) to (-0.8,1);
		\node at (-2,1.4) {$l$ vertices};

		\node[fill,circle,inner sep = 0.05 cm] (chaine1) at (2,0) {};
		\draw (chaine1) to (y);

		\node (c1) at (2.75,0) {\textbf{.}};
		\node (c2) at (3,0) {\textbf{.}};
		\node (c3) at (3.25,0) {\textbf{.}};
		\node[fill,circle,inner sep = 0.05 cm] (c4) at (4,0) {};

		\draw (r) to (c1);
		\draw (c3) to (c4);

		\node[fill,circle,inner sep = 0.05 cm] (chaine2) at (2,1) {};
		\draw (chaine2) to (y);

		\node (c'1) at (2.75,1) {\textbf{.}};
		\node (c'2) at (3,1) {\textbf{.}};
		\node (c'3) at (3.25,1) {\textbf{.}};
		\node[fill,circle,inner sep = 0.05 cm] (c'4) at (4,1) {};

		\draw (chaine2) to (c'1);
		\draw (c'3) to (c'4);

		\node[fill,circle,inner sep = 0.05 cm] (chaine3) at (2,-1) {};
		\draw (chaine3) to (y);

		\node (c''1) at (2.75,-1) {\textbf{.}};
		\node (c''2) at (3,-1) {\textbf{.}};
		\node (c''3) at (3.25,-1) {\textbf{.}};
		\node[fill,circle,inner sep = 0.05 cm] (c''4) at (4,-1) {};

		\draw (chaine3) to (c''1);
		\draw (c''3) to (c''4);

		\node () at (2,0.5) {\textbf{.}};
		\node () at (2,0.65) {\textbf{.}};
		\node () at (2,0.35) {\textbf{.}};
		\node () at (2,-0.5) {\textbf{.}};
		\node () at (2,-0.65) {\textbf{.}};
		\node () at (2,-0.35) {\textbf{.}};

		\draw[decorate,decoration={brace,raise=0.1cm}] (0.8,1.2) to (4.2,1.2);
		\node at (2.5,1.6) {$m$ vertices};
        
        \draw[decorate,decoration={brace,raise=0.1cm}] (4.6,1.2) to (4.6,-1.2);
		\node at (5.45,0) {$k$ chains};
	\end{tikzpicture}
	\end{center}
    \caption{Graph $W_{k,l,m}$}\label{pincer}
\end{figure}

\begin{proposition}\label{prop:upperbound} 
For all $k\geq 2, \frac 12\leq \alpha_{I,k} \leq \frac{1}{\varphi}$, more precisely:\\
(i) $\alpha_I= \frac 12$, which means that the greedy algorithm is optimal for {\sc Firefighter} in finite trees;\\
(ii) $\alpha_{I,2} = \frac{1}{\varphi}$, which means that \Cref{algo:2firefighters} is optimal if at most two firefighters are available;\\
(iii) $\alpha_{I,4}<\frac{1}{\varphi}$.
\end{proposition}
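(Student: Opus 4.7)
The lower bound $\alpha_{I,k}\geq 1/2$ for every $k\geq 2$ is immediate from \Cref{prop:greed}: $\mathbf{GR}$ is itself an online algorithm achieving ratio $1/2$ on every instance, regardless of the firefighter budget. It remains to establish the three upper bounds (i)--(iii); in all three cases the hard instance will be drawn from the family $W_{k,l,m}$ of \Cref{pincer}, with parameters tuned to the relevant threshold.

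For (ii), the matching lower bound $\alpha_{I,2}\geq 1/\varphi$ is exactly \Cref{theo2}. For the upper bound I plan to use $W_{1,l,m}$ with $l$ and $m$ tuned so that $w_y/w_x$ approaches $\varphi$. Any online algorithm facing $f_1=1$ must irrevocably pick either $x$ or $y$. If it picks $y$ (the heavier subtree), the adversary reveals the sequence $(1,0,\ldots,0,1)$ with the second firefighter at the turn corresponding to the deepest level of $T[x]$, so the algorithm gains almost no additional protection while $\mathbf{Bob}$ could have picked $x$ at turn~$1$ and a deep chain vertex at that later turn. If the algorithm picks $x$, the adversary commits to the sequence $(1)$ and $\mathbf{Bob}$ simply protects $y$, saving the heavier side. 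The golden ratio arises precisely as the equilibrium where the two adversarial responses yield the same ratio, and by tuning $l,m$ both can be driven arbitrarily close to $1/\varphi$, giving $\alpha_{I,2}\leq 1/\varphi$.

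For (i), I extend the same idea to trees $W_{k,l,m}$ with $k$ growing. Presenting one firefighter per turn for $k$ turns, the adversary can force the online algorithm to commit early to one side of the root, whereas offline $\mathbf{Bob}$ distributes firefighters optimally across the $k$ chains of the $y$-side. With appropriate scaling of $l$ and $m$ against $k$, a direct calculation shows the achievable competitive ratio approaches $1/2$ as $k\to\infty$, giving $\alpha_I\leq 1/2$; combined with \Cref{prop:greed} this yields $\alpha_I = 1/2$ and hence the optimality of $\mathbf{GR}$ in the limit.

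For (iii), the strict inequality $\alpha_{I,4}<1/\varphi$ is the most delicate part. The plan is to pick a tree $W_{k,l,m}$ (with $k\geq 3$) and a finite collection of adversary sequences each using at most four firefighters, arranged so that for every possible pattern of early commitments the online algorithm can make, at least one sequence punishes it below the $1/\varphi$ threshold by a fixed margin. The main obstacle is the case analysis: four firefighters give the adversary strictly more freedom to stagger reveals across turns and across distinct chains, and one must verify that the delicate two-case equilibrium that produced the exact value $1/\varphi$ in the proof of \Cref{theo2} is no longer stable. I expect the core estimate to parallel \Cref{eq:double ratio}, but with a product of three or four ratios whose common value is forced to be strictly less than $1/\varphi$.
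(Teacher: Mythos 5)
Your overall strategy coincides with the paper's: the lower bound is exactly \Cref{prop:greed}, and all three upper bounds are obtained from the family $W_{k,l,m}$ with an adversary whose sequence depends on the online algorithm's first move. Part (ii) as you describe it is essentially the paper's argument on $W_{1,l,\lfloor\varphi l\rfloor}$ (sequence $(1,0,0,\ldots)$ against the player who protects $x$, and $(1,0,1,0,\ldots)$ against the player who protects $y$). For part (i), make the adversary's two-pronged response explicit: if the online player protects $x$, the adversary must \emph{stop} after $(1,1,0,0,\ldots)$ --- if firefighters kept arriving, the player would keep collecting chains and the ratio would not fall to $\frac12$ --- whereas if the player protects $y$, the adversary continues with $(1,0,1,1,1,\ldots)$ so that $\mathbf{Bob}$ harvests the star and all $k$ chains; with $l=m-1=k^4$ the better of the two online choices yields $\frac{1+1/k}{2}\rightarrow\frac12$. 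Part (iii) is where your proposal remains only a plan: no tree or sequences are exhibited, and the product-of-ratios estimate you anticipate is not needed. The paper simply takes the single instance $W_{4,901,1001}$ and checks two sequences: $(1,1,0,\ldots)$ gives ratio $\frac{1151}{1901}$ if the player protects $x$, and $(1,0,1,1,1,0,\ldots)$ gives $\frac{1002}{1645}$ if the player protects $y$; both values are strictly below $\frac1\varphi$, so one explicit tree and one computation per branch suffice. You should carry out such a concrete verification to close (iii).
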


\begin{proof} 
\Cref{prop:greed} shows that $\alpha_I\geq\frac 12$.
Given integers $l,m,k$ such that $k|m-1$, we define the graph $W_{k,l,m}$ as shown in \Cref{pincer}. We will assume that $m>k^2$.

{\em (i)} Let us consider an online algorithm  for $W_{k,l,m}$. As established in \Cref{sec:online-version} we can assume that the online algorithm plays in $T_i$ at turn $i$. If $f_1=1$, the algorithm will protect either $x$ or $y$. If $x$ is selected and the firefighter sequence is $(1,1,0,0,0,\ldots)$, our online algorithm protects the branch of $x$ and one of the $k$ chains, while the optimal offline algorithm protects $y$ and the star. Its performance is then $\frac{l+\frac{m-1}k}{l+m-1}$. If, however, $y$ is protected instead during the first turn and if the firefighter sequence is $(1,0,1,1,1,\ldots)$, the online algorithm protects the branch of $y$ and one vertex of the star whilst the optimal algorithm protects the branch of $x$ as well as the $k$ chains, minus $\frac{k(k+1)}2$ vertices. If $l=m-1=k^4$, for large values of $k$, the online  algorithm which protects $x$ is more performant and its competitive ratio is $\frac{1+\frac 1k}2$. Having $k\rightarrow +\infty$ shows that $\alpha\leq \frac 12$. Since the greedy algorithm $\mathbf{GR}$ guarantees $\alpha_I\geq\frac 12$, we have $\alpha_I=\frac 12$.

{\em (ii)} Consider the graphs $W_{1,l,\lfloor\varphi l\rfloor}$. If the online algorithm protects $x$, the adversary selects the sequence $(1,0,0,0,\ldots)$, whereas if the online algorithm protects $y$, $(1,0,1,0,0,0,\ldots)$ is selected. In both cases, the performance tends to $\frac 1\varphi$ when $l\rightarrow +\infty$.

{\em (iii)} If at most 4 firefighters are available, the graph $W_{4,901,1001}$ gives an example where $\frac 1\varphi$ cannot be reached. Indeed, if $f_1=1$ and the online algorithm protects $x$, then the adversary will select the sequence $(1,1,0,0,0,\ldots)$, as in the proof of {\em (i)}, for a performance of $\frac{1151}{1901}$. If the online algorithm protects $y$, since firefighters are limited to 4, the adversary will select $(1,0,1,1,1,0,0,0,\ldots)$, for a performance of $\frac{1002}{1645}$. This second choice is slightly better; however $\frac{1002}{1645}<\frac 1\varphi$.
\end{proof}

We have also proved that there is a $\frac{1}{\varphi}$-competitive algorithm if three firefighters are presented (i.e., $\alpha_{I,3}=\frac{1}{\varphi}$). This algorithm is similar to \Cref{algo:2firefighters} in that it places the first firefighter on one of the three largest branches and greedily places each of the other two on the largest branch available at the time they are presented. However, our proof involves a much more technical case-by-case analysis, and will not be detailed here. 

\section{Separating Firefighter Sequences}\label{sec:separe}

\subsection{Definitions}

We now consider the fractional firefighter problem on infinite graphs.  We say that a sequence of firefighters $(f_i)$ is \emph{weaker} than $(f'_i)$ (or $(f'_i)$ is \emph{stronger} than $(f_i)$) if $\forall k, S_k\leq S'_k=\sum\limits_{i=1}^k f'_i$, and we write $(f_i)\preceq (f'_i)$. If we also have $\exists k: S_k < S'_k$, $(f_i)$ is said to be \emph{strictly weaker} than $(f'_i)$ and we write $(f_i)\prec(f'_i)$. 

\begin{lemma}\label{weakness} If the fire can be contained in the instance $(G,r,(f_i))$ and if $(f_i)\preceq (f'_i)$, then the fire can also be contained in $(G,r,(f'_i))$ by an online algorithm that knows $(f_i)$ in advance.
\end{lemma}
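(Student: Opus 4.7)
The plan is to explicitly build $\sigma'$ by having it simulate a given winning strategy $\sigma$ for $(G,r,(f_i))$, exploiting the cumulative surplus $S'_k - S_k \geq 0$ through pre-placement. Since $\sigma'$ is allowed to know $(f_i)$ in advance, it can precompute $\sigma$'s placement tasks $(j,v,p^\sigma_j(v))$ and list them in non-decreasing order of the turn index $j$. At each turn $k$ of the new game, once $f'_k$ is revealed, $\sigma'$ works down this list using its full budget, splitting a task across consecutive turns whenever it exceeds the currently available budget; this splitting is unproblematic in the fractional setting.

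The central invariant I would prove is cumulative dominance: $P^{\sigma'}_k(v)\geq P^\sigma_k(v)$ for every vertex $v$ and every turn $k$. Indeed, $\sigma'$'s cumulative placement at the end of turn $k$ equals $\min(S'_k,T^\sigma)$, where $T^\sigma$ is the total of all of $\sigma$'s placements; since $S'_k\geq S_k\geq \sum_{j\leq k}\sum_v p^\sigma_j(v)$, all tasks of $\sigma$ with $j\leq k$ are completed by the end of $\sigma'$'s turn $k$, and dominance on each vertex follows. An induction on $k$ using \Cref{eq:fire spread}, which is monotone non-decreasing in the neighbour values $b_{i-1}(v')$ and monotone non-increasing in $p^c_i(v)$, then yields the corresponding fire dominance $b^{\sigma'}_k(v)\leq b^\sigma_k(v)$ for all $v,k$.

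It remains to show that $\sigma'$'s game actually terminates. Since $\sigma$ contains the fire at some finite turn $i^*$ on an infinite graph, the support of $b^\sigma_{i^*}$ must be finite: a finite per-turn budget $f_j$ cannot simultaneously block infinitely many newly ignited vertices, so any infinite burning frontier would force $b^\sigma$ to keep growing. By turn $i^*$, $\sigma'$ has completed every task of $\sigma$, so $P^{\sigma'}_k$ is frozen at a level at least $P^\sigma_\infty$ for $k\geq i^*$; starting from $b^{\sigma'}_{i^*}\leq b^\sigma_\infty$, the iterates evolve under a fixed update rule, are monotone non-decreasing, remain bounded above by the fixed point $b^\sigma_\infty$, and take values in the finite set of protection-deficit sums along paths within the finite support of $b^\sigma_\infty$, so $b^{\sigma'}$ must stabilise in finitely many further turns. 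The main obstacle I expect is this last termination step on an infinite graph; the cumulative-dominance portion is routine once the simulation and the monotonicity of \Cref{eq:fire spread} in both arguments have been identified.
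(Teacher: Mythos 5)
Your proposal is correct and follows essentially the same route as the paper, whose entire proof is the one-line observation that one can ``protect the same vertices, possibly earlier'' thanks to the cumulative surplus $S'_k\geq S_k$; your simulation with pre-placement, the dominance invariants, and the termination check are just the bookkeeping that the paper leaves implicit. No substantive difference in approach.
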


\begin{proof}
Given a winning strategy in the instance $(G,r,(f_i))$, if $(f'_i)$ firefighters are available, we contain the fire by protecting the same vertices, possibly earlier than in the initial strategy.
\end{proof}

However, if $(f_i)\prec(f'_i)$, for {\sc Fractional Firefighter}, it is not always the case that there is an infinite graph $G$ such that the fire can be contained in $(G,r,(f'_i))$ but not in $(G,r,(f_i))$ (see \Cref{ex1}). We call such a $G$ a \emph{separating graph} for $(f_i)$ and $(f'_i)$, and we say that $G$ \emph{separates} $(f_i)$ and $(f'_i)$ in $N$ turns if the fire can be contained in $N$ turns for $(f'_i)$ but not for $(f_i)$. In this section, we give sufficient conditions for the existence of a separating graph.

\begin{example}\label{ex1}
Let $f_1=1$, $f'_1=1.5$ and $\forall i\geq 2$, $f_i=f'_i=0$. Although $(f_i)\prec(f'_i)$, no graph separates those two sequences.
\end{example}

Note that for {\sc Firefighter}, the problem is trivial, as shown in \Cref{integral separation}.

\subsection{Spherically Symmetric Trees}\label{SST}
Given a sequence $(a_i)\in (\mathbb N^*)^{\mathbb N^*}$, the \emph {spherically symmetric tree} $T((a_i))$ is the tree rooted in $r$ where every vertex of level $i-1$ has $a_i$ children \cite{lyons_peres_2017}. Note that if $T=T((a_i))$, we have $|T_i|=\prod_{j=1}^i a_j$. The total amount of fire at level $i$ is the sum of the amounts of fire on all vertices of level $i$.

\begin{proposition}\label{spherically symmetric trees}
In the instance $(T((a_i)),r,(f_i))$, the total amount of fire that spreads to level $i$ is $max\{0,F_i\}$, where $F_0=1$ and $F_i= a_i F_{i-1}-f_i$ for all $i$.
\end{proposition}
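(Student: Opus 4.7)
The plan is to establish the claim by induction on $i$, proving that the minimum total fire at level $i$ across all feasible strategies equals $\max\{0, F_i\}$, with the minimum attained by a symmetric (equalizing) strategy. First, I would invoke the simplifications of Section~\ref{sec:simple-tree} to restrict attention to algorithms that only place protection on $T_i$ at turn $i$ and never over-protect. For each $v \in T_i$ with parent $u$, the fire dynamic then reduces to $b(v) = b(u) - p(v)$ with $0 \le p(v) \le b(u)$. Writing $B_i = \sum_{v \in T_i} b(v)$ and $P_i = \sum_{v \in T_i} p(v)$ and summing over $v \in T_i$ yields the scalar identity
\[
B_i = a_i B_{i-1} - P_i, \qquad 0 \le P_i \le f_i, \quad P_i \le a_i B_{i-1}.
\]
This aggregation is the crux: it collapses the multi-vertex dynamics onto one variable per level.

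Then I would run the induction. The base case $B_0 = 1 = F_0$ is immediate. For the step, minimizing $B_i$ in the identity above amounts to maximizing $P_i$, whose largest feasible value is $\min\{f_i, a_i B_{i-1}\}$, giving the minimum $B_i^{\min} = \max\{0, a_i B_{i-1}^{\min} - f_i\}$. Plugging in the inductive hypothesis $B_{i-1}^{\min} = \max\{0, F_{i-1}\}$ and doing a sign case split on $F_{i-1}$: if $F_{i-1} \ge 0$ the identity $\max\{0, a_i F_{i-1} - f_i\} = \max\{0, F_i\}$ is immediate, and if $F_{i-1} < 0$ then $B_{i-1}^{\min} = 0$ forces $B_i^{\min} = 0$, and since $f_i \ge 0$ we also have $F_i = a_i F_{i-1} - f_i < 0$, so $\max\{0, F_i\} = 0$ as well. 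Achievability at each level is witnessed by the symmetric strategy $p(v) = \min\{b(u),\, f_i/|T_i|\}$, which keeps the fire uniform across each level so that the inductive hypothesis carries over to the next step without bookkeeping.

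The only real obstacle is reconciling the statement's clean recursion $F_i = a_i F_{i-1} - f_i$ (with no inner clamp) with the physical fire dynamics (which clamp at $0$ after each level). The case split above resolves this: once $F_{i-1}$ drops below $0$, the absence of the inner clamp is harmless because $a_i F_{i-1}$ stays nonpositive and $F_i$ stays below $0$, so the outer $\max\{0, \cdot\}$ absorbs the discrepancy. This is consistent with the paper's earlier remark that this section permits \emph{pyromaniac} firefighters: the statement is most natural if one reads $F_i$ as a signed virtual fire level whose positive part is the actual fire at level $i$.
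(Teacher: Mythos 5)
Your proof is correct and follows essentially the same route as the paper's: the key observation in both is that on a spherically symmetric tree the total fire per level obeys the scalar recursion $B_i = a_i B_{i-1} - P_i$ with $P_i \le \min\{f_i, a_i B_{i-1}\}$, so the outcome is independent of how the protection is distributed within a level. You merely make explicit the induction and the clamping case split that the paper's two-sentence argument leaves implicit.
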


\begin{proof}
At turn $i$, the player only protects vertices on level $i$. If no protection is placed, the total amount of fire is multiplied by $a_i$. Hence, if $f_i\geq a_iF_{i-1}$, the fire is contained; else, the total amount of fire spreading to level $i$ is $a_iF_{i-1}-f_i$, regardless of how the protection is distributed among the vertices of level $i$.
\end{proof}

\begin{corollary}\label{integral separation}
Let $(f_i)$ and $(f'_i)$ be two distinct integral valued sequences. There is a spherically symmetrical tree which separates $(f_i)$ and $(f'_i)$.
\end{corollary}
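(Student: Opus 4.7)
The plan is to pin down the first index where the two sequences disagree and then build a spherically symmetric tree whose branching factors are tuned so that the recurrence of \Cref{spherically symmetric trees} just crosses zero for the stronger sequence at that level, while it stays strictly positive forever for the weaker one.

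First, let $k$ be the smallest index such that $f_k\neq f'_k$; by symmetry I may assume $f_k<f'_k$. Since both sequences have non-negative integer values, this forces $f'_k\geq 1$. Write $F_i$ and $F'_i$ for the sequences defined in \Cref{spherically symmetric trees} on the tree to be built, attached to $(f_i)$ and $(f'_i)$ respectively, so that $F_0=F'_0=1$.

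Next, I define $(a_i)\in(\mathbb N^*)^{\mathbb N^*}$ inductively. For $i<k$, set $a_i=f_i+1=f'_i+1\geq 1$. A straightforward induction yields $F_i=F'_i=1$ for all $i<k$, since $a_i\cdot 1-f_i=1$. At level $k$, set $a_k=f'_k\geq 1$: this gives $F'_k=a_k\cdot 1-f'_k=0$ and $F_k=f'_k-f_k\geq 1$. Finally, for $i>k$, set $a_i=\lceil(f_i+1)/F_{i-1}\rceil$; by the running invariant $F_{i-1}\geq 1$ together with $f_i\geq 0$, this is a positive integer, and plugging it into the recurrence gives $F_i=a_iF_{i-1}-f_i\geq 1$, preserving the invariant.

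To conclude, apply \Cref{spherically symmetric trees} on $T((a_i))$. For the instance $(T((a_i)),r,(f'_i))$, the total amount of fire reaching level $k$ equals $\max\{0,F'_k\}=0$, so the fire is contained. For the instance $(T((a_i)),r,(f_i))$, the amount of fire reaching level $i$ equals $\max\{0,F_i\}\geq 1$ for every $i\geq k$, so the fire never stops spreading and cannot be contained. Hence $T((a_i))$ separates $(f_i)$ and $(f'_i)$.

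The only genuine (and mild) obstacle is the integrality constraint on the branching factors: $(a_i)$ must be a sequence of positive integers, and one needs the construction to succeed without ever forcing $a_i=0$. This is precisely why $F_{i-1}=1$ is arranged just before the critical level $k$ and why the ceiling is used for $i>k$; both steps exploit the invariant $F_{i-1}\geq 1$, which in turn relies on the integrality of $(f_i)$ and $(f'_i)$ to guarantee $f'_k-f_k\geq 1$.
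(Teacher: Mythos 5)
Your proof is correct and follows essentially the same approach as the paper: both build a spherically symmetric tree with branching factors keyed to $(f_i)$ so that the recurrence of \Cref{spherically symmetric trees} keeps the fire alive for the weaker sequence while the stronger one extinguishes it at the first index of disagreement. The paper's version is slightly leaner — it simply takes $a_i=f_i+1$ for all $i$, which forces $F_i=1$ forever and $F'_k\leq 0$ automatically, so your adjustments at level $k$ and the ceilings beyond it are not needed.
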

\begin{proof}
Let $k$ be the first rank where $f_k\neq f'_k$. We may assume that $f_k<f'_k$. It follows from \Cref{spherically symmetric trees} that in the instance $(T((f_i+1)),r,(f_i))$, the amount of fire that spreads to each level is equal to 1. Yet, in $(T((f_i+1)),r,(f'_i))$, the fire is contained at turn $k$.
\end{proof}

For the purpose of the following technical lemma, we define a new firefighter sequence, which may include a negative term.  
Given a firefighter sequence $(f_i)$ and two non-zero integers $k$ and $\epsilon$, we define the firefighter sequence $(f^{(k,\epsilon}_i)$ via:
$$
f_i^{(k,\epsilon)}=\left\{ 
\begin{array}{lll}
f_k+\epsilon &\;\; & \text{if} \;\; i=k\\
f_{k+1}-\epsilon &\;\; & \text{if} \;\; i=k+1\\
f_i &\;\; & \text{otherwise}
\end{array}
\right.
$$

Note that there is a possibility that $f_{k+1}^{(k,\epsilon)}$ might be negative; however, this does not impact the reasoning. We also define the sequence $(F_i^{(k,\epsilon)})$ via $F_0^{(k,\epsilon)}=1$ and $F_i^{(k,\epsilon)}=a_iF_{i-1}^{(k,\epsilon)}-f_i^{(k,\epsilon)}$. It follows from \Cref{spherically symmetric trees} that the amount of fire which spreads to level $i$ in the instance ($T((a_i)),r,(f_i^{(k,\epsilon)}))$ is $\max\{0,F_i^{(k,\epsilon)}\}$.

\begin{lemma}\label{fire in AB}
The spherically symmetric tree $T=T((a_i))$ separates $(f_i)$ and $(f^{(k,\epsilon)}_i)$ if and only if there is a rank $N$ such that: $A\leq \sum_{i=k+2}^N \dfrac{f_i}{\prod_{j=k+2}^i a_j}<B$, where $A=F^{(k,\epsilon)}_{k+1}$ and $B=F_{k+1}$.
\end{lemma}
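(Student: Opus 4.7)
My plan is to leverage \Cref{spherically symmetric trees}, which reduces the question of fire containment on a spherically symmetric tree to the sign of the scalar sequence $F_i$. Since $f_i^{(k,\epsilon)}=f_i$ for every $i\geq k+2$, both $(F_i)$ and $(F_i^{(k,\epsilon)})$ satisfy the same recurrence $X_i=a_iX_{i-1}-f_i$ from turn $k+2$ onward, and the whole effect of the perturbation at turns $k$ and $k+1$ is encoded in the two distinct initial values $F_{k+1}=B$ and $F_{k+1}^{(k,\epsilon)}=A$. The task therefore reduces to comparing two one-dimensional orbits launched from $B$ and from $A$ under a common recurrence.

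The key manoeuvre is a telescoping trick to linearise this orbit. Setting $\Pi_i:=\prod_{j=k+2}^i a_j$ (with empty product $\Pi_{k+1}=1$), I divide the recurrence by $\Pi_i$ to obtain
\[
\frac{X_i}{\Pi_i}-\frac{X_{i-1}}{\Pi_{i-1}}=-\frac{f_i}{\Pi_i},
\]
and summing from $i=k+2$ to $N$ yields the closed-form identities
\[
\frac{F_N}{\Pi_N}=B-G_N, \qquad \frac{F_N^{(k,\epsilon)}}{\Pi_N}=A-G_N,
\]
where $G_N:=\sum_{i=k+2}^N \frac{f_i}{\Pi_i}$ is exactly the quantity appearing in the lemma.

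Applying \Cref{spherically symmetric trees} once more, the fire still spreads to level $N$ under the schedule $(f_i)$ iff $F_N>0$, i.e.\ iff $G_N<B$, and it has been contained by turn $N$ under $(f_i^{(k,\epsilon)})$ iff $F_N^{(k,\epsilon)}\leq 0$, i.e.\ iff $G_N\geq A$. Hence, at turn $N$ the fire is contained under the perturbed sequence while still spreading under the original one precisely when $A\leq G_N<B$, and the existence of such an $N$ is exactly the condition for $T$ to separate the two sequences. Note that the inequality $A\leq B$ implicit in the condition is automatically consistent with the computation $B-A=(a_{k+1}-1)\epsilon$.

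The main obstacle is not conceptual but concerns bookkeeping: one must pick the right base point for the telescoping (turn $k+1$, when the perturbation has just ended and both orbits start evolving under a common recurrence), handle the empty product at $i=k+1$, and ensure that the mechanical recurrence remains meaningful at turn $k+1$ even though $f_{k+1}^{(k,\epsilon)}$ may be negative. This last point is harmless here because from turn $k+2$ on only the original non-negative $f_i$ enter the formula, so \Cref{spherically symmetric trees} applies verbatim to the tail of the game, and $A$, $B$ serve purely as scalar initial conditions rather than physical fire amounts at turn $k+1$.
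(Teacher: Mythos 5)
Your proof is correct and follows essentially the same route as the paper: both rely on \Cref{spherically symmetric trees}, unroll the common recurrence satisfied by $F_i$ and $F_i^{(k,\epsilon)}$ for $i\geq k+2$ to express $F_N$ and $F_N^{(k,\epsilon)}$ in terms of $\sum_{i=k+2}^N f_i/\prod_{j=k+2}^i a_j$, and translate $F_N^{(k,\epsilon)}\leq 0<F_N$ into $A\leq \sum_{i=k+2}^N f_i/\prod_{j=k+2}^i a_j<B$. The only (cosmetic) difference is that the paper first writes $F_n=|T_n|\bigl(1-\sum_{i=1}^n f_i/|T_i|\bigr)$, i.e.\ normalises at level $0$ and then rebases at level $k+1$, whereas you telescope directly from the base point $k+1$.
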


\begin{proof}
It follows from \Cref{spherically symmetric trees} that $F_n=\prod_{j=1}^n a_j -\sum_{i=1}^n f_i\prod_{j=i+1}^n a_j$. So, $F_n=|T_n|(1-\sum_{i=1}^n\frac{f_i}{|T_i|})$. The condition for $T((a_i))$ to separate $(f_i)$ and $(f^{(k,\epsilon)})$ can be stated as follows: there is a rank $N$ such that $F^{(k,\epsilon)}_N\leq 0<F_N$. Hence, there is an $N$ such that
$$|T_N|(1-\sum_{i=1}^N\frac{f^{(k,\epsilon)}_i}{|T_i|})\leq 0<|T_N|(1-\sum_{i=1}^N\frac{f_i}{|T_i|}).$$ Therefore $$1-\sum_{i=1}^{k+1}\frac{f^{(k,\epsilon)}_i}{|T_i|}\leq\sum_{i=k+2}^N\frac{f_i}{|T_i|}<1-\sum_{i=1}^{k+1}\frac{f_i}{|T_i|}.$$ And finally, $$A\leq \sum_{i=k+2}^N \dfrac{f_i}{\prod_{j=k+2}^i a_j}<B,$$
with $A=F^{(k,\epsilon)}_{k+1}=|T_{k+1}|(1-\sum_{i=1}^{k+1}\frac{f^{(k,\epsilon)}_i}{|T_i|})$ and $B=F_{k+1}=|T_{k+1}|(1-\sum_{i=1}^{k+1}\frac{f_i}{|T_i|})$.
\end{proof}

\begin{proposition}\label{prop:Fire in AB}
Given two sequences $(f_i)$ and $(f'_i)$ such that $(f_i)\preceq (f'_i)$, let $k$ be the smallest integer such that $f_k\neq f'_k$ and let $\epsilon=f'_k-f_k$. The spherically symmetric tree $T=T((a_i))$ separates $(f_i)$ and $(f'_i)$ if and only if there is a rank $N$ such that: $A\leq \sum_{i=k+2}^N \dfrac{f_i}{\prod_{j=k+2}^i a_j}<B$, \\where $A=F^{(k,\epsilon)}_{k+1}$ and $B=F_{k+1}$.
\end{proposition}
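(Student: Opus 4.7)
The plan is to reduce \Cref{prop:Fire in AB} to \Cref{fire in AB} by routing through the two-position modification $(f_i^{(k,\epsilon)})$. The crucial observation is that the condition stated here is identical to the criterion of \Cref{fire in AB} applied to the pair $(f_i)$ and $(f_i^{(k,\epsilon)})$. So it suffices to establish the equivalence: $T$ separates $(f_i)$ and $(f'_i)$ if and only if $T$ separates $(f_i)$ and $(f_i^{(k,\epsilon)})$.

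First I would verify the sandwich $(f_i) \preceq (f_i^{(k,\epsilon)}) \preceq (f'_i)$. By construction $(f_i^{(k,\epsilon)})$ agrees with $(f_i)$ outside positions $k$ and $k+1$, so the cumulative sums satisfy $S_i^{(k,\epsilon)} = S_i$ for $i \neq k$ (the $+\epsilon$ at $k$ and $-\epsilon$ at $k+1$ cancel), while $S_k^{(k,\epsilon)} = S_{k-1} + f_k + \epsilon = S'_k$. Combined with $(f_i) \preceq (f'_i)$, this yields $S_i^{(k,\epsilon)} \leq S'_i$ for all $i$, and the left inequality of the sandwich is similarly immediate since $\epsilon>0$.

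For the $(\Leftarrow)$ direction, if the condition holds then \Cref{fire in AB} gives that $T$ separates $(f_i)$ and $(f_i^{(k,\epsilon)})$; in particular, fire is contained on $T$ under $(f_i^{(k,\epsilon)})$. Applying \Cref{weakness} to $(f_i^{(k,\epsilon)}) \preceq (f'_i)$ propagates containment to $(f'_i)$, while non-containment under $(f_i)$ persists, hence $T$ separates $(f_i)$ and $(f'_i)$.

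For the $(\Rightarrow)$ direction, assume $T$ separates $(f_i)$ and $(f'_i)$. Non-containment under $(f_i)$ translates via \Cref{spherically symmetric trees} to $F_N > 0$ for every $N$, which is equivalent (via the identity $F_N = F_{k+1}\prod_{j=k+2}^N a_j - \sum_{i=k+2}^N f_i \prod_{j=i+1}^N a_j$) to the right-hand inequality $\sum_{i=k+2}^N f_i/\prod_{j=k+2}^i a_j < B$. The main obstacle is the left-hand inequality, equivalent to $F_N^{(k,\epsilon)} \leq 0$ for some $N$. \Cref{weakness} cannot be applied in this direction since $(f_i^{(k,\epsilon)}) \preceq (f'_i)$ goes the wrong way, so a direct comparison of the $F$-recursions on the SST is needed. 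I would exploit the fact that the modification $(f_i^{(k,\epsilon)})$ concentrates exactly the extra $\epsilon$ firefighters at the first position of difference (offset by a deficit at $k+1$), arguing that this concentrated early placement is at least as effective on a spherically symmetric tree as any later surplus in $(f'_i)$, and thus forces $F_N^{(k,\epsilon)} \leq 0$ whenever $F'_N \leq 0$.
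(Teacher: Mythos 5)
Your reduction to \Cref{fire in AB} and your $(\Leftarrow)$ direction are correct and coincide with the paper's argument: the sandwich $(f_i)\preceq(f_i^{(k,\epsilon)})\preceq(f'_i)$ holds (the paper phrases this as $(f_i^{(k,\epsilon)})$ being the weakest sequence that agrees with $(f_i)$ before $k$, takes the value $f_k+\epsilon$ at $k$ and dominates $(f_i)$), and \Cref{weakness} then transfers containment from $(f_i^{(k,\epsilon)})$ to $(f'_i)$ while non-containment under $(f_i)$ is unaffected.

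The $(\Rightarrow)$ direction is where the proposal breaks down, and the gap is not repairable along the lines you sketch. Your key claim --- that concentrating the extra $\epsilon$ at position $k$ is at least as effective on a spherically symmetric tree as any later surplus of $(f'_i)$, so that $F'_N\leq 0$ forces $F^{(k,\epsilon)}_N\leq 0$ --- is false. The sequence $(f_i^{(k,\epsilon)})$ does not move the surplus of $(f'_i)$ earlier; it discards all of it except the first $\epsilon$ (its partial sums agree with those of $(f_i)$ from index $k+1$ onwards), so it can be strictly less powerful than $(f'_i)$. Concretely, take $f_i=0$ for all $i$, $f'_1=1$, $f'_{10}=100$ and $f'_i=0$ otherwise, so that $k=1$, $\epsilon=1$ and $(f_i^{(1,1)})=(1,-1,0,0,\dots)$; take $a_1=2$ and $a_i=1$ for $i\geq 2$. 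Then $F_n=2>0$ for all $n\geq 1$ and $F'_{10}=-99\leq 0$, so $T((a_i))$ separates $(f_i)$ and $(f'_i)$; yet $F^{(1,1)}_n=2$ for all $n\geq 2$, and indeed $A=B=2$ makes the displayed condition unsatisfiable (in general $B-A=(a_{k+1}-1)\epsilon$, so $a_{k+1}=1$ already empties the target interval). Hence the equivalence you are aiming for --- and with it the ``only if'' half of \Cref{prop:Fire in AB} as stated --- fails whenever $(f'_i)$ owes its containment power to a surplus arriving after level $k+1$. For what it is worth, the paper's own proof establishes only the ``if'' direction, which is the only one used later in \Cref{application}; so your $(\Leftarrow)$ argument is the part that matters, and the converse should be abandoned rather than proved.
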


\begin{proof}  
We have $(f^{(k,\epsilon)}_i)\preceq (f'_i)$, as indeed, $(f^{(k,\epsilon)}_i)$ is the weakest sequence in $\{(g_i)\in\mathbb R^\mathbb N|\forall i<k$, $g_i=f_i$,  $g_k=f_k+\epsilon$ and $(f_i)\preceq(g_i)\}$. Hence, any tree separating $(f_i)$ and $(f^{(k,\epsilon)}_i)$ also separates $(f_i)$ and $(f'_i)$. We conclude using \Cref{fire in AB}.
\end{proof}

\subsection{Targeting Game}
	Given the form of the condition in \Cref{prop:Fire in AB}, we can view this as a special case of a purely numerical problem, which we will call the \emph{targeting game}.
	The instance of the problem is given by two positive real numbers, $A<B$, and a sequence of non-negative real numbers $(f_i)$ which represents the movements towards the target $[A,B[$.
	The player starts at position $u_0=0$ with an initial step size of $1$. We denote by $\delta_i$ the step size at turn $i$, so $\delta_0=1$.
	At each turn $i>0$, the player chooses a positive integer $a_i$ by which he will divide the previous step size, that is to say $\delta_i=\dfrac{\delta_{i-1}}{a_i}=\prod_{j=1}^i a_j^{-1}$. Then, the position of the player is updated with the rule $u_i=u_{i-1}+f_i\delta_i$.
	If there is an integer $N$ such that $u_N\in[A,B[$, then the player wins with the strategy $(a_i)$.

The targeting game can be summarised as follows:
Given $0<A<B$ and a sequence $(f_i)$, is there an $N$ and a sequence $(a_i)$ such that $A\leq \sum_{i=1}^N \dfrac{f_i}{\prod_{j=1}^i a_j} <B$~?

We give two sufficient conditions on the data to ensure the existence of a winning strategy for the player.

\begin{theorem}\label{divergent}
	If there is an $N$ such that $\sum_{i=1}^N f_i \geq A\left\lceil\dfrac{A}{B-A}\right\rceil$, then there exists a sequence $(a_i)$ with $a_i=1, \forall i\geq2$ such that the player wins the targeting game at turn $N$ by selecting $(a_i)$.
\end{theorem}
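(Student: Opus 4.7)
My plan is to exploit the very restrictive form of the proposed strategy: since $a_i=1$ for all $i\geq 2$, the step size remains $\delta_i = 1/a_1$ for every $i\geq 1$, so the position at turn $N$ is simply $u_N = \frac{1}{a_1}\sum_{i=1}^N f_i$. Writing $S=\sum_{i=1}^N f_i$ and $k=a_1$, the whole theorem reduces to finding a positive integer $k$ such that $kA \leq S < kB$, i.e., showing that $S$ lies in the union $\bigcup_{k\geq 1}[kA,kB)$ at some admissible index.

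The first step is therefore to understand the structure of this union. Two consecutive intervals $[kA,kB)$ and $[(k+1)A,(k+1)B)$ leave a gap exactly when $(k+1)A > kB$, which rearranges to $k < A/(B-A)$. Setting $c=\lceil A/(B-A)\rceil$, this shows that for every integer $k\geq c$ one has $(k+1)A\leq kB$, so the intervals are adjacent or overlap and together cover $[cA,+\infty)$.

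Next I use the hypothesis $S\geq A\lceil A/(B-A)\rceil = cA$. Since $S\geq cA$, the integer $k^\star=\lfloor S/A\rfloor$ satisfies $k^\star\geq c$ and $k^\star A\leq S$ by definition. Maximality gives $(k^\star+1)A > S$, and the gap-free property from the previous step yields $(k^\star+1)A\leq k^\star B$. Combining these we obtain $k^\star A\leq S < k^\star B$, which is exactly what we needed.

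Finally I conclude by taking $a_1=k^\star$ and $a_i=1$ for $i\geq 2$: then $u_N = S/k^\star \in [A,B)$, so the player wins at turn $N$ with this strategy. The only subtlety in the argument is the gap-free observation for $k\geq c$; everything else is arithmetic on the hypothesis, and I expect this geometric/covering argument to be the pedagogically clearest way to present the proof.
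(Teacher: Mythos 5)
Your proposal is correct and follows essentially the same route as the paper's proof: both reduce the problem to finding a positive integer $k$ with $kA\leq \sum_{i=1}^N f_i<kB$, and both obtain it from the observation that the intervals $[kA,kB[$ overlap for $k\geq \left\lceil\frac{A}{B-A}\right\rceil$ and hence cover $\left[A\left\lceil\frac{A}{B-A}\right\rceil,+\infty\right[$. Your explicit choice $k^\star=\lfloor S/A\rfloor$ merely spells out the covering step that the paper leaves implicit.
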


\begin{proof}
First, note that if $m\geq \dfrac{A}{B-A}$, then $(m+1)A\leq mB$. It follows that $$[A\left\lceil\dfrac A{B-A}\right\rceil,+\infty[ \subset \bigcup_{k\in\mathbb{N}^*} [kA,kB[.$$ Hence, there is a $k$ such that $kA\leq \sum_{i=1}^N f_i<kB$. So, $A\leq \sum_{i=1}^N \dfrac{f_i}{k}<B$. Therefore, if the player chooses $a_1=k$ and $a_i=1$, for $i\geq 2$, he will have reached the target at turn $N$.
\end{proof}

\begin{theorem}\label{gluttonous rabbit}
	If $|\{i:f_i\geq B\}|\geq \log_2\left(\dfrac{B}{B-A}\right)$, then the player wins the targeting game by choosing at each turn $i$ the smallest positive integer $a_i$ such that $u_i<B$.
\end{theorem}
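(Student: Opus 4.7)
The plan is to track the \emph{residual distance} $r_i := B - u_i$ under the proposed greedy strategy. Because $a_i$ is defined as the smallest positive integer making $u_i < B$, the strategy keeps $u_i < B$ at all times; so winning reduces to proving that eventually $r_i \leq B-A$. I distinguish two regimes at each turn: the \emph{small} regime (A), where $f_i\delta_{i-1} < r_{i-1}$ so $a_i = 1$, $\delta_i = \delta_{i-1}$, and $r_i = r_{i-1} - f_i\delta_{i-1} \leq r_{i-1}$; and the \emph{big} regime (B), where $f_i\delta_{i-1} \geq r_{i-1}$ and the minimality of $a_i$ forces $a_i = \lfloor f_i\delta_{i-1}/r_{i-1}\rfloor + 1 \geq 2$.

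The first step is to show that regime (B) halves the residual: from $a_i \leq f_i\delta_{i-1}/r_{i-1} + 1$ and $f_i\delta_{i-1}/r_{i-1} \geq 1$ one gets $a_i \leq 2 f_i\delta_{i-1}/r_{i-1}$, hence $f_i\delta_{i-1}/a_i \geq r_{i-1}/2$ and thus $r_i \leq r_{i-1}/2$. The main obstacle is that this only helps when each turn with $f_i \geq B$ actually belongs to regime (B); a priori, $\delta_{i-1}$ could be so small that $f_i\delta_{i-1} < r_{i-1}$ even when $f_i \geq B$. To handle this, I would prove by induction the invariant $r_i \leq B\delta_i$. The base case is $r_0 = B = B\delta_0$. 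In regime (A) the invariant persists because $r_i \leq r_{i-1} \leq B\delta_{i-1} = B\delta_i$. In regime (B), a direct computation gives $r_i/\delta_i = a_i r_{i-1}/\delta_{i-1} - f_i$, and the bound $a_i \leq f_i\delta_{i-1}/r_{i-1}+1$ then yields $r_i/\delta_i \leq r_{i-1}/\delta_{i-1} \leq B$. With this invariant, any turn $i$ with $f_i \geq B$ satisfies $f_i\delta_{i-1} \geq B\delta_{i-1} \geq r_{i-1}$, so it is a regime-(B) turn and $r_i \leq r_{i-1}/2$.

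To conclude, $r_i$ is non-increasing (as $f_i \geq 0$) and halves on each turn with $f_i \geq B$. Writing $k := |\{i : f_i \geq B\}|$, after all such turns one has $r \leq B/2^{k}$. The hypothesis $k \geq \log_2\!\bigl(B/(B-A)\bigr)$ rearranges to $B/2^k \leq B-A$, i.e.\ $u \geq A$ at that moment; combined with $u < B$ enforced by the strategy, the player reaches the target interval $[A,B)$, which is the required win.
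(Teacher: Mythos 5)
Your proof is correct and follows essentially the same route as the paper's: your invariant $r_i \leq B\delta_i$ is exactly the paper's statement that $x_i = (B-u_i)/\delta_i$ is non-increasing from $x_0 = B$, and the halving of the residual on turns with $f_i \geq B$ (forced into your regime (B)) matches the paper's argument that such turns have $a_i > 1$ and hence $a_i(B-u_i) \leq B-u_{i-1}$. No substantive differences.
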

\begin{proof}
	Consider a turn $i$ such that $a_i>1$. Given that the player chooses the minimum $a_i$, it follows that $B\leq u_{i-1}+\dfrac{\delta_{i-1}}{a_i-1}f_i$. By definition of $u_i$ and $\delta_i$, we have $\delta_{i-1}f_i=a_i(u_i-u_{i-1})$, so $B\leq u_{i-1}+\dfrac{a_i}{a_i-1}(u_i-u_{i-1})$. Then $B(a_i-1)\leq u_{i-1}(a_i-1)+a_i(u_i-u_{i-1})$ and
\begin{equation}
	\label{equation_Bui}
	a_i(B-u_i)\leq B-u_{i-1}.
\end{equation}

Now consider the sequence $x_i=\dfrac{B-u_i}{\delta_i}$. By dividing \Cref{equation_Bui} by $\delta_{i-1}$, we see that $x_i\leq x_{i-1}$ when $a_i>1$. When $a_i=1$, we also have $x_i=x_{i-1}-f_i\leq x_{i-1}$. 
Thus $(x_i)$ is non-increasing, and $\forall i, x_i\leq x_0=B$.

At any turn $i$ where $f_i\geq B$, we have $f_i\geq x_{i-1}$ and $$\delta_{i-1}f_i\geq \delta_{i-1}x_{i-1}=B-u_{i-1}>u_i-u_{i-1}=\delta_if_i.$$ So $\delta_{i-1}>\delta_i$, and $a_i>1$.  
It then follows from \Cref{equation_Bui} that $B-u_i\leq \dfrac{B-u_{i-1}}{2}$.

Note also that, since $(x_i)$ and $(\delta_i)$ are non-increasing, $(B-u_i)$ is also non-increasing. 
Hence, for all $N$, we have: $$B-u_N\leq \frac{B-u_0}{2^{|\{i\leq N: f_i\geq B\}|}}=\frac{B}{2^{|\{i\leq N: f_i\geq B\}|}}.$$

Finally, choosing $N$ such that $|\{i\leq N:f_i\geq B\}|\geq \log_2\left(\dfrac{B}{B-A}\right)$, we have $A\leq u_N<B$.
\end{proof}

\begin{proposition}\label{application}
Given $(f_i)<(f'_i)$, let $k$ be the smallest integer such  that $f_k\neq f'_k$ and let $\epsilon=f'_k-f_k$. If there is an $N$ such that $\sum_{k+2}^N f_i\geq 2\left\lceil \dfrac 2\epsilon\right\rceil$ or $|\{k+2\leq i\leq N;f_i\geq 2\}|> 1-log_2\epsilon$, then there is a spherically symmetric tree which separates $(f_i)$ and $(f'_i)$ in $N$ turns.
\end{proposition}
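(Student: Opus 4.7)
My plan is to combine \Cref{prop:Fire in AB} with the targeting-game theorems \Cref{divergent} and \Cref{gluttonous rabbit}. I would build a spherically symmetric tree $T=T((a_i))$ in two stages: the branching factors $a_1,\ldots,a_{k+1}$ are chosen to set up a convenient target interval $[A,B)$ with $A=F^{(k,\epsilon)}_{k+1}$ and $B=F_{k+1}$, and the factors $a_{k+2},\ldots,a_N$ implement the winning strategy of the appropriate targeting-game theorem. Concretely, I would take $a_2=\cdots=a_k=1$, $a_{k+1}=2$ (so that $B-A=(a_{k+1}-1)\epsilon=\epsilon$), and $a_1=M$ for a positive integer $M$ to be tuned. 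With $\sigma:=\sum_{j=1}^{k}f_j$, the recursion of \Cref{spherically symmetric trees} yields $B(M)=2M-2\sigma-f_{k+1}$ and $A(M)=B(M)-\epsilon$. Let $M_0$ be the smallest positive integer for which $A(M_0)>0$; since $A$ varies by $2$ per unit increment of $M$, one has $A(M_0)\in(0,2]$ and $B(M_0)\in(\epsilon,2+\epsilon]$.

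If $B(M_0)\le 2$, I would set $M=M_0$, giving $A\in(0,2-\epsilon]$, $B\in(\epsilon,2]$, and $F_j=M-\sum_{i\le j}f_i>0$ for every $j\le k$ (since $M>\sigma$), so the fire is still alive under $(f_i)$ at level $k+1$. \Cref{prop:Fire in AB} then reduces separation to the targeting game with target $[A,B)$. Under the first hypothesis, $A\lceil A/\epsilon\rceil\le 2\lceil 2/\epsilon\rceil\le\sum_{i=k+2}^{N}f_i$, so \Cref{divergent} produces a winning $(a_{k+2},\ldots,a_N)$. Under the second hypothesis, $B\le 2$ ensures $\{i:f_i\ge 2\}\subseteq\{i:f_i\ge B\}$, hence $|\{k+2\le i\le N:f_i\ge B\}|>1-\log_2\epsilon=\log_2(2/\epsilon)\ge\log_2(B/(B-A))$, and \Cref{gluttonous rabbit} applies.

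Otherwise $B(M_0)>2$; writing $B(M_0)=2M_0-2\sigma-f_{k+1}$ forces $M_0>1+\sigma+f_{k+1}/2\ge 1+\sigma$, so $M:=M_0-1\ge 1$ still satisfies $M>\sigma$ and hence $F_j>0$ for every $j\le k$, while $B(M_0-1)=B(M_0)-2\in(0,\epsilon]$ and $A(M_0-1)=A(M_0)-2\in(-\epsilon,0]$. The empty sum $u_{k+1}=0$ then lies in $[A,B)$, so \Cref{prop:Fire in AB} yields separation in $k+1\le N$ turns with no targeting-game argument needed. In both cases, for $i>N$ (respectively $i>k+1$ in the trivial subcase) I would set $a_i=\lceil f_i/F_{i-1}\rceil+1$ so that $F_i>0$ at every subsequent level, guaranteeing the fire is never contained under $(f_i)$. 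The main subtle point is the dichotomy on whether $B(M_0)\le 2$: the bound $B(M_0)\le 2+\epsilon$ coming from the minimality of $M_0$ is exactly what is needed to decrement $M$ by one in the complementary case while keeping the fire alive throughout levels $1,\ldots,k+1$.
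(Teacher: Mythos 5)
Your proof is correct and follows essentially the same route as the paper's: reduce to the targeting game via \Cref{prop:Fire in AB}, choose the branching factors up to level $k+1$ so that $0<B\le 2$ and $B-A\ge\epsilon$, then invoke \Cref{divergent} or \Cref{gluttonous rabbit}. The only real difference is in how the prefix is built: the paper takes $a_i=\lfloor f_i/F_{i-1}\rfloor+1$ for $i\le k$ and $a_{k+1}=\max\{2,\lfloor f_{k+1}/F_k\rfloor+1\}$, which keeps $F_i\le F_{i-1}$ level by level and yields the same bounds directly, whereas you front-load everything into $a_1=M$ and tune $M$ by a minimality/dichotomy argument --- your explicit handling of the degenerate subcase $A\le 0$, where the empty sum already witnesses separation at level $k+1$, is a detail the paper leaves implicit.
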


\begin{proof}

For $i\leq k$, we choose the smallest $a_i$ such that $F_i>0$; i.e. $a_i=\left\lfloor\frac{f_i}{F_{i-1}}\right\rfloor+1$. We then choose $a_{k+1}=\max\{2,\left\lfloor\frac{f_{k+1}}{F_{k}}\right\rfloor+1\}$.

Using \Cref{prop:Fire in AB}, it is sufficient to have a rank $N$ such that: 
$$A\leq \sum_{i=k+2}^N \dfrac{f_i}{\prod_{j=k+2}^i a_j}<B,$$ 
where $A=|T_{k+1}|(1-\sum_{i=1}^{k+1} \frac{f^{(k,\epsilon)}_i}{|T_i|})$ and $B=|T_{k+1}|(1-\sum_{i=1}^{k+1} \frac{f_i}{|T_i|})$

\bigskip
It follows from the choice of $a_i$ that for $i\leq k$, $(a_i-1)F_{i-1}-f_i\leq 0$. Hence, $a_iF_{i-1}-f_i\leq F_{i-1}$, and $F_i\leq F_{i-1}$. If $a_{k+1}=\left\lfloor\frac{f_{k+1}}{F_{k}}\right\rfloor+1$, then $F_{k+1}\leq F_k$. Otherwise, if $a_{k+1}=2$, $F_{k+1}\leq 2F_k$. Finally, we have $B=F_{k+1}\leq 2F_0=2$.

Also, $B-A= \sum_{i=1}^{k+1} (f^{(k,\epsilon)}_i-f_i)\prod_{j=i+1}^{k+1}a_j=(f^{(k,\epsilon)}_k-f_k)a_{k+1}+(f^{(k,\epsilon)}_{k+1}-f_{k+1})=(a_{k+1}-1)\epsilon$. Having chosen $a_{k+1}\geq 2$, we have $B-A\geq\epsilon$.

Thus, we have an $N$ such that $\sum_{k+2}^N f_i\geq 2\left\lceil \dfrac 2\epsilon\right\rceil\geq  A\left\lceil\dfrac{A}{B-A}\right\rceil$ or $|\{k+2\leq i\leq N|f_i\geq 2\}|> 1-log_2\epsilon\geq \log_2\left(\dfrac{B}{B-A}\right)$. The result follows by applying \Cref{divergent} or \Cref{gluttonous rabbit}.

\end{proof}

\begin{remark}
In the case where  $|\{k+2\leq i\leq N;f_i\geq 2\}|> 1-log_2\epsilon$, the sequence $(a_i)$ is entirely created by a greedy algorithm which selects the minimum value of $a_i$ such that $F_i>0$ (and $a_{k+1}\geq 2$). The value of $a_i$ is therefore a function of $F_{i-1}$ and $f_i$.
\end{remark}

\section{Firefighting Sequence vs. Level Growth}\label{sec:asymptotic}

\subsection{Infinite Offline Instances}

On infinite trees, the objective is to contain the fire. We will consider only locally finite trees, i.e., trees where each vertex has finite degree. Given a locally finite rooted tree $T$ with at least one infinite branch, we consider $T^\ast$ the leafless sub-tree obtained from $T$ by pruning finite branches. Formally, $T^\ast$ is the union of all leafless sub-trees of $T$ with the same root, where the union $T_1\cup T_2$ of two such sub-trees of $T$ is the sub-tree induced by $V(T_1)\cup V(T_2)$. Since $T$ is locally finite, the fire is contained on $T$ if and only if it is contained on $T^\ast$.  
Hence, without loss of generality, we may restrict the infinite case to leafless trees. Note that if $T$ is leafless, then $(|T_i|)$ is non-decreasing.

Intuitively, it seems that when the firefighter sequence grows faster, in some sense, than the number of vertices per level, the firefighter should be able to contain the fire. Following this line of reasoning, \Cref{reviewer} and \Cref{comparison} give criteria for infinite instances to be winning based on the asymptotic behaviours of those two sequences.

\begin{proposition}\label{reviewer}
Let $(T,r,(f_i))$ be an instance of {\sc Fractional Firefighter} where $T$ is a  tree of infinite height. If $\sum_{i=1}^{+\infty}\frac{f_i}{|T_i|}>1$, then the instance is winning.
\end{proposition}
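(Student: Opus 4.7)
The plan is to exhibit a simple winning strategy: at each turn $i\geq 1$, spread the $f_i$ firefighters uniformly across level $i$, that is, assign $p_i(v)=\min\{1,\, f_i/|T_i|\}$ to every $v\in T_i$. This is feasible since the total allocation is $|T_i|\cdot\min\{1,\, f_i/|T_i|\}\leq f_i$ and, because any $v\in T_i$ is neither burnt nor protected before turn $i$, the per-vertex cap $p_i(v)\leq 1$ is enforced by the $\min$. The effect is to reduce the analysis to a one-dimensional recurrence analogous to \Cref{spherically symmetric trees}, but on the maximum fire per level rather than the total.

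The key quantity I would track is $m_i = \max_{v\in T_i} b_i(v)$, with $m_0=1$. On a tree each $v\in T_i$ has a unique parent in $T_{i-1}$, so \Cref{eq:fire spread} simplifies to $b_i(v)=\max\{0,\, b_{i-1}(\mathrm{par}(v))-p_i(v)\}$, yielding $m_i\leq \max\{0,\, m_{i-1}-\min\{1, f_i/|T_i|\}\}$. A straightforward induction then gives
\[
m_i\;\leq\; \max\!\left\{0,\; 1 - \sum_{j=1}^{i}\min\!\left\{1,\; \tfrac{f_j}{|T_j|}\right\}\right\}.
\]

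To conclude I would split into two cases. If $f_k\geq |T_k|$ for some $k$, the strategy places $p_k(v)=1$ on every $v\in T_k$, immediately forcing $m_k=0$. Otherwise $\min\{1, f_j/|T_j|\}=f_j/|T_j|$ for every $j$, so the hypothesis $\sum_{j\geq 1} f_j/|T_j|>1$ supplies an $N$ with $\sum_{j=1}^N f_j/|T_j|>1$, and the inductive bound forces $m_N=0$. Once $m_N=0$, the fire can no longer spread past level $N$ (by a short induction: if $m_i=0$ then for every $v\in T_{i+1}$, $b_{i+1}(v) \leq m_i - p_{i+1}(v) \leq 0$), so the game terminates and $(T,r,(f_i))$ is winning.

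I do not anticipate a genuine obstacle. The trap I would avoid is attempting to control the \emph{total} fire on a level, which behaves badly when the branching of $T$ is unbounded; tracking the pointwise maximum instead works cleanly because on a tree the spread at turn $i$ copies a parent's fire value to each child individually, so protecting all of $T_i$ uniformly by $\min\{1, f_i/|T_i|\}$ subtracts exactly that amount from $m_i$.
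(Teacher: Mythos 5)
Your proposal is correct and follows essentially the same route as the paper: spread the protection uniformly over level $i$ so that the fire value carried by each vertex of $T_n$ is $\max\{0,\,1-\sum_{i\leq n} f_i/|T_i|\}$, which vanishes once the partial sums exceed $1$. Your version merely makes explicit the per-vertex cap via the $\min\{1,\cdot\}$ and the induction that the paper leaves implicit.
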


\begin{proof}
The firefighter wins by spreading at each turn $n$ the amount of protection evenly among all vertices of $T_n$. The amount of fire that reaches $v\in T_n$  is $\max\{0,1-\sum_{i\leq n}\frac{f_i}{|T_i|}\}$. Hence, the fire is contained after a finite number of turns.
\end{proof}

Unfortunately, we need a more complex criterion to obtain a sufficient condition to win in both {\sc Firefighter} and {\sc Fractional Firefighter}.

\begin{theorem}\label{comparison}
Let $(T,r,(f_i))$ be an instance of {\sc Firefighter} or {\sc Fractional Firefighter} where $T$ is a leafless tree. If $S_i\rightarrow +\infty$ and $\frac{S_i}{|T_i|} \nrightarrow 0$, then the instance $(T,r,(f_i))$ is winning for the firefighter.
\end{theorem}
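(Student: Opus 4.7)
The plan is to reduce both {\sc Firefighter} and {\sc Fractional Firefighter} to \Cref{reviewer} by establishing, under the hypotheses of the theorem, the key inequality $\sum_{i=1}^{+\infty} f_i/|T_i|=+\infty$. Once this is known, a finite $N$ exists with $\sum_{i=1}^N f_i/|T_i|>1$, and \Cref{reviewer} gives containment for {\sc Fractional Firefighter}; for {\sc Firefighter}, one adapts the even-spread strategy to an integer one.

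To prove the key inequality, I would split on the behaviour of $(|T_i|)$, which is non-decreasing since $T$ is leafless. If $(|T_i|)$ is bounded above by some $M$, then $\sum f_i/|T_i|\geq \sum f_i/M = \lim_i S_i/M=+\infty$, using $S_i\to+\infty$. If $|T_i|\to+\infty$, I would argue by contrapositive: suppose $\sum f_i/|T_i|$ converges; given $\epsilon>0$, pick $N$ with $\sum_{i>N} f_i/|T_i|<\epsilon/2$; then for $n$ large enough that $|T_n|>2S_N/\epsilon$, using $|T_i|\leq|T_n|$ for $i\leq n$,
\[
\frac{S_n}{|T_n|} = \frac{S_N}{|T_n|} + \sum_{i=N+1}^n \frac{f_i}{|T_n|} \leq \frac{\epsilon}{2} + \sum_{i=N+1}^n \frac{f_i}{|T_i|} < \epsilon,
\]
contradicting the hypothesis $S_n/|T_n|\nrightarrow 0$.

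For the {\sc Firefighter} case, the even-spread strategy $p(v)=f_i/|T_i|$ is not integer-feasible. I would replace it by a greedy integer analog: at each turn $i$, protect $f_i$ burning vertices of $T_i$ maximising the number of descendants at a target level $T_N$, and track the potential $U_i=\sum_{v\in B_i}|T_N\cap T[v]|$ counting uncovered targets. An averaging argument (the average number of $T_N$-descendants of a burning vertex in $T_i$ is at least $U_{i-1}/|T_i|$) yields $U_i\leq U_{i-1}(1-f_i/|T_i|)$, hence $U_N\leq|T_N|\prod_{i\leq N}(1-f_i/|T_i|)$; integrality then forces $U_N=0$ once this product drops below $1$. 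The main obstacle is precisely this last step: the crude multiplicative bound requires $\sum_{i\leq N} f_i/|T_i|$ to dominate $\log|T_N|$, a condition the hypotheses do not directly guarantee when $|T_i|$ grows fast, so a finer strategy exploiting the tree's laminar structure or the integrality of tree-cut LPs may be needed to close the integer case cleanly.
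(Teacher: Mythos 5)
Your reduction of the fractional case to \Cref{reviewer} is correct and is a genuinely different, more elementary route than the paper's. Your contrapositive argument (if $\sum f_i/|T_i|$ converges and $|T_i|\to+\infty$, then $S_n/|T_n|=S_N/|T_n|+\sum_{i=N+1}^n f_i/|T_n|\leq \epsilon/2+\sum_{i>N}f_i/|T_i|<\epsilon$, using monotonicity of $|T_i|$) cleanly shows that the hypotheses force $\sum f_i/|T_i|=+\infty$, after which \Cref{reviewer} finishes the fractional case. The paper only observes this implication \emph{a posteriori}, as a corollary deduced from \Cref{comparison} itself together with \Cref{losing spherically symmetric}; your direct proof is shorter and non-circular. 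What your route does not buy is the integral case, which is where the paper's proof does its real work.

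For {\sc Firefighter} the gap you flag is genuine, and the missing idea is the paper's choice of termination criterion. You try to drive the number $U$ of unprotected target-level vertices below $1$, which indeed requires $\sum_{i\leq N} f_i/|T_i|\gtrsim \log|T_N|$ and is not implied by the hypotheses. The paper instead only drives $U$ below the number of firefighters still to arrive before the fire reaches the target level. Concretely: from $S_i/|T_i|\nrightarrow 0$ extract $C$ and an increasing $\sigma$ with $|T_{\sigma(i)}|\leq CS_{\sigma(i)}$; by \Cref{taupin} applied to $S_{\sigma(i)}$ (here $S_i\to+\infty$ is used), the product $\prod_i\bigl(1+\frac{S_{\sigma(i)}-S_{\sigma(i-1)}}{CS_{\sigma(i)}}\bigr)$ diverges, so one can fix $N$ making it exceed $2C$ and then a target level $\sigma(h)$ with $S_{\sigma(h)}>2S_{\sigma(N)}$. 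Your greedy potential argument (run against level $\sigma(h)$, with the decrease factors grouped between consecutive $\sigma$-indices) then yields $U_{\sigma(N)}\leq |T_{\sigma(h)}|/(2C)\leq S_{\sigma(h)}/2\leq S_{\sigma(h)}-S_{\sigma(N)}$, and the remaining firefighters protect the surviving target vertices one by one. Since the per-turn optimisation has an integral optimal solution when the $f_i$ are integers, this closes both versions simultaneously; no finer use of laminarity or tree-cut integrality is needed.
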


The proof of \Cref{comparison} will require the following lemma, which is probably well-known:

\begin{lemma}\label{taupin}
If $(u_n)$ is a positive sequence that increases towards $+\infty$, then  $\sum \frac{u_n-u_{n-1}}{u_n}$ diverges.
\end{lemma}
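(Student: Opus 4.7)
The plan is to argue by contradiction using a blocking/telescoping idea: I want to split the tail of the series into blocks on which the partial sum is uniformly bounded below, which is possible precisely because $u_n\to+\infty$ allows the denominators to ``catch up'' with the increments.

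Concretely, suppose for contradiction that $\sum \frac{u_n-u_{n-1}}{u_n}$ converges. First, I would construct recursively an increasing sequence of indices $n_0<n_1<n_2<\cdots$ such that $u_{n_k}\geq 2\,u_{n_{k-1}}$ for every $k\geq 1$. This is possible because $(u_n)$ is positive and tends to $+\infty$: starting from any $n_{k-1}$, one can take $n_k$ to be the smallest index beyond $n_{k-1}$ with $u_{n_k}\geq 2\,u_{n_{k-1}}$.

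Next, I would use monotonicity of $(u_n)$ to bound each block from below. For $n_{k-1}<n\leq n_k$ we have $u_n\leq u_{n_k}$, so
$$\sum_{n=n_{k-1}+1}^{n_k}\frac{u_n-u_{n-1}}{u_n}\ \geq\ \sum_{n=n_{k-1}+1}^{n_k}\frac{u_n-u_{n-1}}{u_{n_k}}\ =\ \frac{u_{n_k}-u_{n_{k-1}}}{u_{n_k}}\ \geq\ \frac{1}{2},$$
where the last inequality follows from $u_{n_k}\geq 2u_{n_{k-1}}$. Summing over $k\geq 1$ then gives a lower bound of $\sum_{k\geq 1}\tfrac12=+\infty$ for the series, contradicting the assumed convergence. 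Hence the series must diverge.

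I do not expect any real obstacle here; the only point requiring care is ensuring that the recursive construction of $(n_k)$ is well-defined (which uses $u_n\to+\infty$ in an essential way) and that the telescoping in each block is valid (which uses the monotonicity $u_n\leq u_{n_k}$ to replace the denominator). Both are immediate from the hypotheses on $(u_n)$.
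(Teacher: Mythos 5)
Your proof is correct, and it takes a genuinely different route from the paper's. The paper sets $v_n=\frac{u_n-u_{n-1}}{u_n}$, disposes of the case $v_n\nrightarrow 0$ immediately, and otherwise writes $\ln\frac{u_0}{u_n}=\sum_{i\le n}\ln(1-v_i)$, concluding from $u_n\to+\infty$ that $\sum\ln(1-v_i)\to-\infty$ and hence, via the equivalence $\ln(1-v_i)\sim -v_i$, that $\sum v_i$ diverges. You instead use a blocking/telescoping argument: choose indices with $u_{n_k}\ge 2u_{n_{k-1}}$, bound each block below by $\frac{u_{n_k}-u_{n_{k-1}}}{u_{n_k}}\ge\frac12$ using monotonicity (which guarantees both $u_n-u_{n-1}\ge 0$ and $u_n\le u_{n_k}$ on the block), and sum. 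Your argument is more elementary and self-contained --- it avoids logarithms, the case split on whether $v_n\to 0$, and the appeal to equivalence of series of nonnegative terms --- at the cost of the auxiliary index construction; the paper's is more compact on the page. One cosmetic remark: your contradiction framing is unnecessary, since the block lower bounds directly show the partial sums of $\sum\frac{u_n-u_{n-1}}{u_n}$ are unbounded; you never actually use the assumed convergence.
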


\begin{proof}
Let $v_n=\frac{u_n-u_{n-1}}{u_n}$. If $v_n\nrightarrow 0$, then $\sum v_n$ diverges. Let us assume that $v_n\rightarrow 0$. Since $\frac{u_{n-1}}{u_n}=1-v_n$, we have $ln\frac{u_0}{u_n}=ln\prod_{i=1}^n 1-v_i=\sum_{i=1}^n ln(1-v_i)$. Since $u_n\rightarrow +\infty$, $ln\frac{u_0}{u_n}\rightarrow -\infty$, so $\sum ln(1-v_i)\rightarrow -\infty$, and since $v_n\rightarrow 0$, $\sum_{i=1}^n v_i\rightarrow +\infty$.
\end{proof}
We may now prove \Cref{comparison}.

\begin{proof}
Since $T$ is leafless, $(|T_i|)$ is non-decreasing  and
since $\frac{S_i}{|T_i|} \nrightarrow 0$, there is a positive constant $C$ and an increasing injection $\sigma:\mathbb N\rightarrow \mathbb N$ such that $$\forall i, |T_{\sigma(i)}|\leq C S_{\sigma(i)}<\infty.$$

Let $a:V(T)\rightarrow [0,1]$ denote the amount of protection we will place on each vertex. In order to describe the amount of each vertex that remains unprotected at the end of each turn, we use a sequence of labellings $l_i:V(T)\rightarrow [0,1]$. Initially, all vertices are unprotected, so $l_0=1$. At turn $i$, protection is placed on vertices of $T_i$, so $\forall v\in V(T), l_i(v)=l_{i-1}(v)-\sum_{v'\in T_i} a(v')\mathds 1_{v'\unlhd v}$. 
For any $W\subset V(T)$ and any labelling $l$, we define $l(W)=\sum_{v\in W} l(v)$.

For all $i$ and $h\in \mathbb N^*$ with $h>i$, for all $v\in T_i$, let $w_h(v)=|\{v'\in T_h,v\lhd v'\}|$. Thus, $\sum_{v\in T_i} w_h(v)=|T_h|$ and for all $j<i$,
\begin{equation}\label{eq1a}
 \sum_{v\in T_i} w_h(v)l_j(v)=l_j(T_h).
\end{equation}
It follows from \Cref{taupin} that $\sum\frac{S_{\sigma(i)}-S_{\sigma(i-1)}}{S_{\sigma(i)}}$ diverges.  Hence $\prod(1+\frac{S_{\sigma(i)}-S_{\sigma(i-1)}}{CS_{\sigma(i)}})$ also diverges. Let $N$ be such that $\prod_{i=1}^N(1+\frac{S_{\sigma(i)}-S_{\sigma(i-1)}}{CS_{\sigma(i)}})>2C$ and let $h$ be such that $S_{\sigma(h)}>2S_{\sigma(N)}$.

We consider the following strategy. At each turn $i$,  
we protect the vertices which have the most descendants in level $\sigma(h)$, i.e., $a(v),v\in T_i$ is an optimal solution of the following linear program:
$$
\left\{ \begin{array}{lll}
\max \sum_{v\in T_i} a(v)w_{\sigma(h)}(v)\\
 a(v)\leq l_{i-1}(v) \; ; \; v\in T_i\\
\sum_{v\in T_i}a(v)\leq f_i
\end{array}\right.$$

If $f_i\geq l_{i-1}(T_i)$ then we can protect the whole level $T_i$, thus $T_{\sigma(h)}$, and the fire is contained. So we assume $f_i< l_{i-1}(T_i)$ for all $i<\sigma(h)$. Then, $f_i\frac{l_{i-1}(v)}{l_{i-1}(T_i)},v\in T_i$ is a solution of the linear program with $\sum_{v\in T_i}f_i\frac{l_{i-1}(v)}{l_{i-1}(T_i)}= f_i$. It follows, by optimality and using \Cref{eq1a}, that:
$$\sum_{v\in T_i} a(v)w_{\sigma(h)}(v)\geq \sum_{v\in T_i} f_i\frac{l_{i-1}(v)}{l_{i-1}(T_i)}w_{\sigma(h)}(v)=\frac{f_il_{i-1}(T_{\sigma(h)})}{l_{i-1}(T_i)}$$
Note that for $j\leq i$, $l_{j-1}(T_j)\leq |T_j|\leq |T_{\sigma(i)}|$.
Hence, 
\begin{eqnarray*}
l_{\sigma(i-1)}(T_{\sigma(h)})-l_{\sigma(i)}(T_{\sigma(h)})&=&\sum_{j=\sigma(i-1)+1}^{\sigma(i)}\sum_{v\in T_j} a(v)w_{\sigma(h)}(v)\\
&\geq& \sum_{j=\sigma(i-1)+1}^{\sigma(i)}\frac{f_jl_{j-1}(T_{\sigma(h)})}{l_{j-1}(T_j)}\\
&\geq& \sum_{j=\sigma(i-1)+1}^{\sigma(i)}\frac{f_jl_{\sigma(i)}(T_{\sigma(h)})}{|T_{\sigma(i)}|}\\
&\geq& \frac{S_{\sigma(i)}-S_{\sigma(i-1)}}{CS_{\sigma(i)}}l_{\sigma(i)}(T_{\sigma(h)})\ \ ({\rm since\  } |T_{\sigma(i)}|\leq C S_{\sigma(i)}).
\end{eqnarray*}

So $$l_{\sigma(i-1)}(T_{\sigma(h)})\geq (1+\frac{S_{\sigma(i)}-S_{\sigma(i-1)}}{CS_{\sigma(i)}})l_{\sigma(i)}(T_{\sigma(h)}).$$ 

Therefore, 
$$|T_{\sigma(h)}|\geq l_{\sigma(0)}(T_{\sigma(h)})\geq \prod_{i=1}^N(1+\frac{S_{\sigma(i)}-S_{\sigma(i-1)}}{CS_{\sigma(i)}}) l_{\sigma(N)}(T_{\sigma(h)})>2Cl_{\sigma(N)}(T_{\sigma(h)}).$$ 
And consequently,
$$l_{\sigma(N)}(T_{\sigma(h)})\leq \frac {|T_{\sigma(h)}|}{2C}\leq \frac 12S_{\sigma(h)}\leq S_{\sigma(h)}-S_{\sigma(N)}.$$

This means that the firefighters available between turns $\sigma(N)$ and $\sigma(h)$ outnumber the unprotected  vertices on level $\sigma(h)$. Hence, the strategy will win in at most $\sigma(h)$ turns.
\end{proof}

Conversely, asymptotic behaviours cannot guarantee that an instance will be losing. Indeed, if $f_1\geq |T_1|$, the instance is winning regardless of asymptotic behaviours. However, having selected asymptotic behaviours where the levels of the tree grow faster than the firefighter sequence, \Cref{losing spherically symmetric} guarantees that some instances with those asymptotic behaviours will be losing.

\begin{theorem}\label{losing spherically symmetric} Let $(t_i)\in {\mathbb N^*}^{\mathbb N^*}$ and $(f_i)\in {\mathbb R^+}^{\mathbb N^*}$ be such that $(t_i)$ is non-decreasing and tends towards $+\infty$. Then, $\sum\frac{f_i}{t_i}$ converges if and only if there exists a spherically symmetric tree $T$ rooted in $r$ such that:
\begin{itemize}
\item $\exists N:\forall i\geq N, \frac{t_i}2\leq |T_i|\leq t_i$
\item the instance $(T,r,(f_i))$ is losing for {\sc (Fractional) Firefighter}.
\end{itemize}
\end{theorem}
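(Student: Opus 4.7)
The plan is to prove the two directions of the equivalence separately, with the converse ($\Rightarrow$) carrying all the work. For the direct direction ($\Leftarrow$), I assume a losing tree $T$ exists. By \Cref{spherically symmetric trees}, losing means $F_i>0$ for every $i$; unrolling the recurrence gives $F_i=|T_i|\bigl(1-\sum_{k=1}^i f_k/|T_k|\bigr)$, so every partial sum of $\sum f_k/|T_k|$ is strictly below $1$ and the series converges with total at most $1$. Since $|T_k|\leq t_k$ for $k\geq N$, we get $\sum_{k\geq N} f_k/t_k\leq\sum_{k\geq N} f_k/|T_k|\leq 1$, and adding the finitely many terms with $k<N$ yields convergence of $\sum f_k/t_k$.

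For the converse, suppose $S=\sum f_k/t_k<\infty$ and write $G_k=\sum_{j\leq k} f_j$. The plan is to design $(a_i)$ so that $|T_i|=\prod_{j\leq i} a_j$ lies in $[t_i/2,t_i]$ for $i\geq N$ while keeping $\sum f_k/|T_k|$ bounded below $1$. The main technical step, which I expect to be the principal obstacle, is the auxiliary claim that $G_{N-1}/t_N\to 0$ as $N\to\infty$. I would prove it by splitting at an auxiliary rank: given $\epsilon>0$, pick $M$ with $\sum_{k\geq M} f_k/t_k<\epsilon$; then for $N>M$,
\[
\frac{G_{N-1}}{t_N}\;=\;\frac{1}{t_N}\sum_{k<M} f_k\;+\;\sum_{M\leq k<N}\frac{f_k}{t_k}\,\frac{t_k}{t_N}\;\leq\;\frac{G_{M-1}}{t_N}+\epsilon,
\]
and the first term vanishes as $N\to\infty$ with $M$ fixed, giving $\limsup_N G_{N-1}/t_N\leq \epsilon$ for every $\epsilon$.

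Armed with this, I pick $N$ large enough that both $\sum_{k\geq N} f_k/t_k<1/4$ and $G_{N-1}/t_N<1/4$. The tree is built in two stages: set $a_1=t_N$ and $a_2=\cdots=a_N=1$, so $|T_k|=t_N$ for all $k\leq N$; for $k>N$, greedily take $a_k=1$ if $|T_{k-1}|\geq t_k/2$, and $a_k=\lfloor t_k/|T_{k-1}|\rfloor\geq 2$ otherwise. A short induction, using that in the second case consecutive multiples of $|T_{k-1}|$ are $t_k/2$-dense and that $|T_{k-1}|\leq t_{k-1}\leq t_k$, shows $|T_k|\in[t_k/2,t_k]$ for every $k\geq N$. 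For the losing condition, I split $\sum_{k\geq 1} f_k/|T_k|=G_{N-1}/t_N+\sum_{k\geq N} f_k/|T_k|$; the first piece is below $1/4$ by construction, and the second is bounded by $2\sum_{k\geq N} f_k/t_k<1/2$ using $|T_k|\geq t_k/2$. Thus every partial sum stays strictly below $1$, and \Cref{spherically symmetric trees} then yields $F_i>0$ for all $i$, so the constructed instance is losing.
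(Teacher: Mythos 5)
Your proof is correct and follows essentially the same route as the paper's: the same construction $a_1=t_N$, $a_i=1$ up to $N$, then a greedy floor choice keeping $|T_i|\in[t_i/2,t_i]$, the same three-way split of $\sum f_k/|T_k|$ bounded by $1/4+1/4+1/2$ (you merge the first two pieces via the observation that $S_{N-1}/t_N\to 0$, which is the paper's choice of $M$ with tail $<1/4$ and $t_N>4S_M$ packaged as a limit), and the same characterisation of losing instances via $F_i=|T_i|(1-\sum_{k\le i}f_k/|T_k|)>0$ for the reverse direction (stated directly rather than contrapositively). No gaps.
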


\begin{proof}
1) Suppose that $\sum\frac{f_i}{t_i}$ converges. Let $M$ be such that $\sum_{i=M+1}^{+\infty} \frac{f_i}{t_i}<\frac 14$ and let $N>M$ be such that $t_N>4S_M$. We choose $a_1=t_N$, $a_i=1$ for $2\leq i\leq N$, and $a_i=\left\lfloor\frac{t_i}{\prod_{j=1}^{i-1} a_j}\right\rfloor$ for $i>N$. We will show that $T=T((a_i))$ is a solution.

Let us show by induction that $\forall i\geq N, \frac{t_i}2\leq |T_i|\leq t_i$. Note that  $$|T_N|=\prod_{j=1}^N a_j=t_N.$$ 
Assume that the result holds for $i-1$ where $i>N$: $$\frac{t_{i-1}}2\leq\prod_{j=1}^{i-1} a_j \leq t_{i-1}.$$
Since $t_i\geq t_{i-1}$, we have $a_i\geq 1$. Hence, $$a_i\leq \frac{t_i}{\prod_{j=1}^{i-1} a_j}\leq a_i+1\leq 2a_i.$$ 
So, $\frac{t_i}2\leq |T_i|\leq t_i$, and the result holds for all $i\geq N$.

\bigskip
Since $T$ is spherically symmetric, the amount of fire that spreads to level $n$ is $\max(0,F_n)$, where $F_n=|T_n|(1-\sum_{i=1}^n\frac{f_i}{|T_i|})$. For $n>N$, we have:
\begin{eqnarray*}
\sum_{i=1}^n\frac{f_i}{|T_i|}&=&\frac {S_N}{t_N}+ \sum_{i=N+1}^n \frac{f_i}{|T_i|}\\
	&\leq & \frac {S_M}{t_N} +\frac 1{t_N}\sum_{i=M+1}^N f_i +2\sum_{i=N+1}^n \frac{f_i}{t_i}\\
	&<&\frac 14+\frac 1{4}+\frac 24= 1.
\end{eqnarray*}

Hence, $F_n>0$ for all $n$, and therefore the fire cannot be contained.

\bigskip
\noindent 2) Conversely, if $\sum\frac{f_i}{t_i}$ diverges and $T=T((a_i))$ is such that $\exists N:\forall i\geq N,\\ \frac{t_i}2\leq |T_i|\leq t_i$, then $\sum\frac{f_i}{|T_i|}$ also diverges. It follows that $F_n=|T_n|(1-\sum_{i=1}^n\frac{f_i}{|T_i|})$ is negative above a certain rank. Hence, the fire is contained.
\end{proof}

\begin{corollary}
Let $(t_i)\in {\mathbb N^*}^{\mathbb N^*}$ and $(f_i)\in {\mathbb R^+}^{\mathbb N^*}$ be such that $(t_i)$ is non-decreasing and tends towards $+\infty$. Let $S_i=\sum_{1\leq k\leq i} f_k$. If $S_i\rightarrow +\infty$ and $\frac{S_i}{t_i} \nrightarrow 0$, then $\sum\frac{f_i}{t_i}$ diverges.
\end{corollary}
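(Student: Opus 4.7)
The plan is to argue by contraposition, combining the two preceding theorems. Suppose $\sum \frac{f_i}{t_i}$ converges; I want to show that this forces either $S_i \not\to +\infty$ or $\frac{S_i}{t_i} \to 0$.

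First, I would invoke \Cref{losing spherically symmetric} (direction ``1)''), which, from the convergence of $\sum \frac{f_i}{t_i}$, produces a spherically symmetric tree $T$ rooted at some $r$ and an integer $N$ such that $\frac{t_i}{2} \leq |T_i| \leq t_i$ for all $i \geq N$, and such that the instance $(T, r, (f_i))$ is losing for \textsc{(Fractional) Firefighter}.

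Next, I would compare this with \Cref{comparison}. Since $|T_i| \leq t_i$ for $i \geq N$, we have $\frac{S_i}{|T_i|} \geq \frac{S_i}{t_i}$ eventually. Therefore, if both hypotheses $S_i \to +\infty$ and $\frac{S_i}{t_i} \nrightarrow 0$ held, then a fortiori $S_i \to +\infty$ and $\frac{S_i}{|T_i|} \nrightarrow 0$, and \Cref{comparison} would guarantee that the instance $(T, r, (f_i))$ is winning. This contradicts what we obtained from \Cref{losing spherically symmetric}.

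Hence at least one of the two asymptotic hypotheses must fail, which proves the corollary by contraposition. There is no real obstacle here: the work has been done in the two preceding theorems, and the corollary is simply the logical combination of the ``losing tree exists'' construction of \Cref{losing spherically symmetric} with the ``winning strategy exists'' criterion of \Cref{comparison}, bridged by the level-size comparison $|T_i| \leq t_i$.
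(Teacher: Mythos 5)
Your argument is correct and is essentially identical to the paper's own proof: contraposition via \Cref{losing spherically symmetric} to produce a losing spherically symmetric instance with $\frac{t_i}{2}\leq |T_i|\leq t_i$, then the contrapositive of \Cref{comparison}, using $|T_i|\leq t_i$ to transfer the condition $\frac{S_i}{t_i}\nrightarrow 0$ to $\frac{S_i}{|T_i|}\nrightarrow 0$. No gaps to report.
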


\begin{proof}
If $\sum\frac{f_i}{t_i}$ were convergent, it follows from \Cref{losing spherically symmetric} that there would be a spherically symmetric tree $T$ such that:
\begin{itemize}
\item $\exists N:\forall i\geq N, \frac{t_i}2\leq |T_i|\leq t_i$
\item the instance $(T,r,(f_i))$ is losing for {\sc (Fractional) Firefighter}.
\end{itemize}

It then follows from \Cref{comparison} that $S_i\nrightarrow +\infty$ or $\frac{S_i}{|T_i|}\rightarrow 0$. Hence $S_i\nrightarrow +\infty$ or $\frac{S_i}{t_i}\rightarrow 0$. 
\end{proof}

It follows that for {\sc Fractional Firefighter}, \Cref{comparison} is weaker than \Cref{reviewer}. \Cref{comparison} remains interesting for {\sc Firefighter} and it  gives an alternative winning method for {\sc Fractional Firefighter}. 

\begin{remark}
Under the hypotheses of \Cref{losing spherically symmetric}, if $\sum\frac{f_i}{t_i}$ converges, we can create a losing instance $(T',r',(f_i))$ with $\forall i, |T_i|=t_i$ by adding $t_i-|T_i|$ leaves to level $i$ for all $i$. We will have $|T_n|=t_n$ without adding leaves if and only if there exists a spherically symmetric tree with $t_i$ vertices on level $i$ for all $i$.
\end{remark}

\begin{remark}
Remark 1.12 in~\cite{Dyer-containment} gives a sufficient condition for an instance to be losing for {\sc Firefighter} in a general graph satisfying some growth condition, using a similar criteria to the convergence of $\sum\frac{f_i}{t_i}$. In general, both results cannot be compared. In our set-up however, their result can be seen as the particular case of {\sc Firefighter} where $(t_i)=(\lambda^i)$   for some $\lambda$.   
\end{remark}

\subsection{Online Firefighting on Trees with Linear Level Growth}\label{sec:linear}

In the previous section, \Cref{reviewer} gives a winning strategy for online {\sc Fractional Firefighter} in cases where $\sum\frac{f_i}{|T_i|}>1$. However, \Cref{comparison} is limited to the offline case, as the winning strategy requires the player to be able to compute $\sigma(h)$ from the start. In this section, we give a result which works for online {\sc Firefighter} in the case of rooted trees $(T,r)$ where the number of vertices per level increases linearly, i.e. $|T_i|=\mathcal O(i)$. We say that such a tree has \emph{linear level growth}.

\begin{remark}
The linear level growth property of $T$ remains if we choose a different root $r'$. Indeed, if $d$ is the distance between $r$ and $r'$, the set of vertices at distance $i$ from $r'$ is included in $\bigcup_{j=i-d}^{i+d} T_j$, the cardinal of which is $\mathcal O(i)$.
\end{remark}

\begin{theorem}\label{linear level growth}


Let $\mathcal I$ be the set of instances $(T,r,(f_i))$ of {\sc Firefighter} where $T$ has linear level growth and there exists a non-zero periodic sequence which is weaker than $(f_i)$. There is an online algorithm which contains the fire for every instance in $\mathcal I$.

\end{theorem}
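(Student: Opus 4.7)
The plan is to work with the leafless pruned sub-tree $T^*$ of $T$, since finite branches die out on their own and the fire on $T$ is contained if and only if it is contained on $T^*$. Linear level growth transfers to $T^*$: there is a constant $C$, computable by the algorithm from $T$, such that $|T^*_i| \leq Ci$, and $|T^*_i|$ is non-decreasing because $T^*$ is leafless. The hypothesis of a non-zero periodic weaker sequence $(g_i)$ with period $p$ and positive sum $G$ over one period yields $S_n \geq G\lfloor n/p \rfloor$, so the cumulative firefighter count is $\Omega(n)$; crucially, the algorithm need not know $(g_i)$ explicitly, since it only uses this as an unknown linear lower bound.

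For the strategy, at each turn $i$ the algorithm inspects the current on-fire set $B_{i-1} \subseteq T^*_{i-1}$ and its at-risk children $N(B_{i-1}) \subseteq T^*_i$. It assigns each candidate $v \in N(B_{i-1})$ a branching weight $w_i(v)$ equal to the number of branching vertices (those $u$ with at least two children in $T^*$) lying in the sub-tree $T^*[v]$ up to a look-ahead depth $L_i$ that grows with $i$. The algorithm protects the $f_i$ vertices of largest weight, breaking ties arbitrarily; this ensures in particular that branching children are cut before non-branching ones. When $f_i \geq |N(B_{i-1})|$, every at-risk vertex is protected and the fire is contained.

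The analysis starts from the telescoping inequality $|B_n| \leq 1 + \sum_{j=0}^{n-1}(|N(B_j)|-|B_j|) - S_n$, proven by induction on $n$ using that each vertex of $T^*$ has at least one child. The total ``encountered branching'' $\sum_{j}(|N(B_j)|-|B_j|)$ is dominated by the total branching of $T^*$ up to level $n$, which is $|T^*_n|-1 \leq Cn-1$. The priority rule encourages $B_j$ to concentrate on non-branching vertices, which strictly lowers the encountered branching compared to the worst case. A potential-function argument combining $|B_n|$ with a residual weight term and using $S_n = \Omega(n)$ then drives $|B_n|$ to zero in finite time, at which point the fire is contained.

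The main obstacle is calibrating the look-ahead depth $L_i$ and the priority function so that the greedy online choices suffice, despite the algorithm's ignorance of the actual sequence $(f_i)$. The delicate regime is when the periodic rate $G/p$ is smaller than the level-growth rate $C$, where the simple bound $|B_n| \leq |T^*_n|-S_n$ never turns non-positive; overcoming this requires the structural exploitation of the sparsity of branching (a necessary consequence of linear growth) via strategic routing of the fire through non-branching corridors, and proving that the greedy priority achieves this uniformly for every instance in $\mathcal{I}$. A secondary technical point is ruling out that protection is ``wasted'' on sub-trees that would have died out anyway, which is handled by restricting the priority to vertices in $N(B_{i-1})$ throughout the proof.
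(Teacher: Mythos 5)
Your proposal has a genuine gap, and you have in fact located it yourself: the additive telescoping bound $|B_n|\leq 1+\sum_j(|N(B_j)|-|B_j|)-S_n$ only yields $|B_n|\leq 1+Cn-S_n$, which never becomes non-positive precisely in the main regime where the periodic firefighter rate is below the level-growth constant $C$. The devices you invoke to overcome this --- ``strategic routing of the fire through non-branching corridors'' and an unspecified potential function --- do not hold up: the fire spreads to \emph{every} unprotected child, so the player cannot route it anywhere, and with $O(1)$ firefighters per period there is no way to cut all branching children of the fire front at a given turn, so the priority rule cannot force the front to concentrate on non-branching vertices. What is missing is a \emph{multiplicative} rather than additive decay. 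The paper obtains it by fixing a distant target level $h(n)=2nN$ in advance and, at each turn $nj$ where a firefighter is available, protecting the unprotected vertex of $T_{nj}$ with the most descendants in $T_{h(n)}$. Since $|T_{nj}|\leq Cnj$, this removes at least a fraction $\frac{1}{Cnj}$ of the still-reachable vertices of $T_{h(n)}$, so after $N$ firefighters at most $|T_{h(n)}|\prod_{j=1}^N\bigl(1-\frac{1}{Cnj}\bigr)$ remain; this product tends to $0$ by \Cref{product limit} (divergence of the harmonic series), and $N$ is chosen so that fewer than $N$ reachable vertices of $T_{h(n)}$ survive while $N$ further firefighters still arrive before the fire reaches level $h(n)$. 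Your ``branching weight with growing look-ahead $L_i$'' gestures at counting descendants, but without a fixed target level you never obtain the $\bigl(1-\frac{1}{Ci}\bigr)$ contraction, and no quantitative estimate replaces it.

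A second, independent gap is the online aspect: the target level $h(n)$ depends on the period $n$ of the weak periodic subsequence, which the online player does not know. The paper resolves this with a guess-and-restart scheme (start with $n_0=100$, and set $n_{k+1}=h(n_k)(\lceil S_{h(n_k)}\rceil+1)$ if the fire survives past turn $h(n_k)$), arguing via \Cref{weakness} that a failed phase can be absorbed as wasted turns against a strictly weaker periodic sequence. Your proposal asserts that the greedy rule needs no knowledge of the periodic sequence, but the analysis you sketch would still have to produce a turn by which containment is guaranteed, and nothing in the proposal supplies that. (The reduction to the pruned leafless tree $T^\ast$, by contrast, is fine and consistent with the paper's standing convention for infinite instances.)
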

The proof of \Cref{linear level growth} will use the following lemma:

\begin{lemma}\label{product limit}
For any real number $a>1$, $\lim_{n\rightarrow +\infty}\prod_{j=1}^n \frac{ja-1}{ja}=0$. 
\end{lemma}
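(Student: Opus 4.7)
The plan is to take logarithms and reduce the claim to the divergence of the harmonic series. Write
\[
\prod_{j=1}^n \frac{ja-1}{ja}= \prod_{j=1}^n\left(1-\frac{1}{ja}\right).
\]
Since $a>1$, we have $0<\frac{1}{ja}\le\frac{1}{a}<1$ for every $j\ge 1$, so each factor lies in $(0,1)$ and the logarithm of the product is well defined.

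Next I would use the standard inequality $\ln(1-x)\le -x$ valid for all $x\in[0,1)$, which follows from $e^{-x}\ge 1-x$. Applying it termwise gives
\[
\ln\prod_{j=1}^n\left(1-\frac{1}{ja}\right)=\sum_{j=1}^n\ln\left(1-\frac{1}{ja}\right)\le -\frac{1}{a}\sum_{j=1}^n\frac{1}{j}.
\]
The right-hand side tends to $-\infty$ as $n\to\infty$ because the harmonic series diverges, so the logarithm of the product tends to $-\infty$, which is equivalent to the product tending to $0$. This finishes the proof.

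There is no real obstacle here; the only point requiring a line of justification is the inequality $\ln(1-x)\le -x$ and the fact that, because $a>1$, the arguments $\tfrac{1}{ja}$ stay bounded away from $1$ so no logarithm blows up. An alternative, purely telescoping‑free route would be to compare with $\prod_{j=1}^{n}\tfrac{j-1/a}{j}$ and use the asymptotic $\Gamma(n+1-1/a)/\Gamma(n+1)\sim n^{-1/a}\to 0$, but the log/harmonic-series argument above is shorter and self-contained.
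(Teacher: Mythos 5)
Your proof is correct and follows essentially the same route as the paper's: take logarithms and deduce that $\sum_{j=1}^n\ln\bigl(1-\tfrac{1}{ja}\bigr)\to-\infty$ from the divergence of $\sum_j \tfrac{1}{ja}$. You merely make explicit, via $\ln(1-x)\le -x$, a comparison the paper leaves implicit.
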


\begin{proof}
We have $\ln \prod_{j=1}^n \frac{ja-1}{ja}=\sum_{j=1}^n \ln(1-\frac 1{ja})$ and since $\sum_{j=1}^n \frac 1{ja}\rightarrow +\infty$, we have \\ $\sum_{j=1}^n \ln(1-\frac 1{ja})\rightarrow -\infty$. Hence, $\prod_{j=1}^n \frac{ja-1}{ja}\rightarrow 0$.
\end{proof}
We can now prove \Cref{linear level growth}:

\begin{proof} Since $T$ has linear level growth, let $C$ be such that $\forall i, |T_i|\leq Ci$. Without loss of generality, we assume $C>1$. That a non-zero periodic sequence is weaker than $(f_i)$ means that $(\mathds 1_{n|i})\preceq (f_i)$ for all $n$ greater than some $m$. First, we will give an offline strategy to contain the fire with one firefighter every $n$ turns. Then, we will show that online instances with $(\mathds 1_{n|i})\preceq(f_i)$ for an $n$ known to the player are winning. Finally, we will describe an online winning strategy when such a $(\mathds 1_{n|i})$ is unknown.

Given an integer $n$, let us first consider the instance $(T,r,(\mathds 1_{n|i}))$. It follows from \Cref{product limit} that there exists an integer $N$ such that $\prod_{j=1}^N \frac{Cnj-1}{Cnj}<\frac 1{2Cn}$. Let $h(n)=2nN$. A winning strategy for this offline instance is obtained by protecting at turn $nj$ the unprotected vertex of $T_{nj}$ with the highest number of descendants in level $h(n)$. Since $|T_{nj}|\leq Cnj$, the remaining number of unprotected vertices in $T_{h(n)}$ is reduced by at least $\frac 1{Cnj}$ of its previous value. So the number of unprotected vertices of $T_{h(n)}$ remaining after $nN$ turns is less than $|T_{h(n)}|\prod_{i=1}^N \frac {Cnj-1}{Cnj} \leq \frac{|T_{h(n)}|}{2Cn}\leq N$. Since $N$ firefighters remain to be placed between turns $N$ and $h(n)$, the strategy is winning in at most $h(n)$ turns.

If the player knows in advance that $(\mathds 1_{n|i})\preceq(f_i)$ for a given $n$, the above strategy can be adapted using \Cref{weakness}.

In the general case,  
assume that $(\mathds 1_{n|i})\preceq (f_i)$ for some $n$, but the player does not know which $n$. The online strategy proceeds as follows: we initially play as though under the assumption that $(\mathds 1_{n_0|i})\preceq(f_i)$ with $n_0=100$. If the fire is not contained by turn $h(n_0)$, or later on by turn $h(n_k)$, we choose $n_{k+1}=h(n_k)\left(\left\lceil S_{h(n_k)}\right\rceil+1\right)$. We now assume that $(\mathds 1_{h(n_k)|i})\preceq(f_i)$. It follows that after cancelling the first $h(n_k)$ terms of $(f_i)$, i.e., replacing $f_\ell$ with 0 for $\ell\leq h(n_k)$, the resulting sequence is stronger than $(\mathds 1_{n_{k+1}|i})$. So we can consider that the first $h(n_k)$ turns were wasted and follow the strategy for $n_{k+1}$ until turn $h(n_{k+1})$. Eventually, this strategy will win when $n_k$ is large enough.
\end{proof}
\section{Conclusion}\label{sec:conclusion}

The main thread of this paper is to consider a general sequence of number of firefighters available at each turn in {\sc (Fractional) Firefighter}, whereas most of the existing work on this topic considers constant sequences. We give first results, in the case of trees, for three independent research questions that arise when including such a sequence as part of the instance. 

We introduce the online version of {\sc (Fractional) Firefighter} on trees and provide initial results for the finite case. So far, our results outline the potential of this approach and suggest many open questions. To our knowledge, \Cref{prop:greed} is the first non-trivial competitive (and also approximation) analysis for {\sc Fractional Firefighter}
and a first question would be to investigate whether a better competitive ratio can be obtained  for {\sc Fractional Firefighter} in finite trees. Although the case of trees is already challenging, the main open question will be to study online  {\sc (Fractional) Firefighter problem} in other classes of finite graphs. 

As far as we know, the second question has never been considered yet. The existence of a separating tree for any two given firefighter sequences seems very hard in general. spherically symmetric trees  provide convenient examples of separating trees since they allow us to ignore the playing strategy. This allowed us to express the problem in terms of the targeting game, which completely hides the structure of the tree. An interesting question will be to investigate whether the existence of a separating tree implies that of a spherically symmetric separating tree. So far, we only considered the case where one of the sequences is weaker than the other. The general case remains fully open. 

We have shown that some conditions on the asymptotic behaviours of the firefighter sequence vs. the tree growth guarantee that the instance is winning. Yet, other conditions guarantee the existence of losing instances. We conjecture that all {\sc Firefighter} instances where $\sum \frac{f_i}{|T_i|}$ diverges are winning.

Finally, note that the question of approximating {\sc (Fractional) Firefighter} in finite trees for a general firefighter sequence is also an important research direction that, to our knowledge, remains uninvestigated.

\section*{Acknowledgements}
We are grateful to the anonymous reviewers for their helpful comments and suggestions, especially for highlighting~\Cref{reviewer}. We also acknowledge the support of GEO-SAFE,  H2020-MSCA-RISE-2015 project \# 691161.
\bibliographystyle{plain}
\bibliography{references}

\end{document}